\newtheorem{theorem}{Theorem}[section]
\newtheorem{corollary}[theorem]{Corollary}
\newtheorem{lemma}[theorem]{Lemma}
\newtheorem{definition}[theorem]{Definition}
\newtheorem{claim}[theorem]{Claim}
\newtheorem{question}{Question}
\newcommand{\bits}{\{-1, 1\}}
\newcommand{\AND}{\textnormal{AND}}
\newcommand{\sr}{\mathrm{sr}}
\newcommand{\OR}{\textnormal{OR}}
\newcommand{\UPPcc}{\mathbf{UPP}^{\text{cc}}}
\newcommand{\UPP}{\mathbf{UPP}}
\newcommand{\PPcc}{\mathbf{PP}^{\text{cc}}}
\newcommand{\PP}{\mathbf{PP}}
\newcommand{\UPPdt}{\mathbf{UPP}^{\text{dt}}}
\newcommand{\MAJ}{\text{MAJ}}
\newcommand{\R}{\mathbb{R}}
\newcommand{\eps}{\varepsilon}
\newcommand{\poly}{\operatorname{poly}}
\newcommand{\sgn}{\operatorname{sgn}}
\newcommand{\bra}[1]{\{#1\}}
\newcommand{\pmone}{\bra{-1, 1}}
\renewcommand{\deg}{\mathrm{deg}}
\newcommand{\mathify}[1]{\ifmmode{#1}\else\mbox{$#1$}\fi}
\newcommand{\abs}[1]{\mathify{\left| #1 \right|}}
\title{Sign-Rank Can Increase Under Intersection}
\date{}
\begin{document}

\author[1]{Mark Bun}
\author[2]{Nikhil S.~Mande}
\author[2]{Justin Thaler}
\affil[1]{Simons Institute for the Theory of Computing and Boston University}
\affil[2]{Georgetown University}
\maketitle

\begin{abstract}
The communication class $\UPPcc$ is a communication analog of the Turing Machine complexity class $\PP$. It is characterized by a matrix-analytic complexity measure called sign-rank (also called dimension complexity), and is essentially the most powerful communication class against which we know how to prove lower bounds.

For a communication problem $f$, let $f \wedge f$ denote the function that evaluates $f$ on two disjoint inputs and outputs the $\AND$ of the results.  We exhibit a communication problem $f$ with $\UPPcc(f)= O(\log n)$, and $\UPPcc(f \wedge f) = \Theta(\log^2 n)$. 
This is the first result showing that $\UPP$ communication complexity can increase by more than a constant factor under intersection. We view this as a first step toward showing that $\UPPcc$, the class of problems with polylogarithmic-cost $\UPP$ communication protocols, is not closed under intersection.

Our result shows that the function class consisting of intersections of two majorities on $n$ bits has dimension complexity $n^{\Omega(\log n)}$. This matches an upper bound of  (Klivans, O'Donnell, and Servedio, FOCS 2002), who used it to give a quasipolynomial time algorithm for PAC learning intersections of polylogarithmically many majorities. Hence, fundamentally new techniques will be needed to learn this class of functions in polynomial time. 
\end{abstract}

\section{Introduction}\label{s: intro}
The unbounded-error communication complexity model $\UPPcc$ was
introduced by Paturi and Simon \cite{PS86} as a natural communication
analog of the Turing Machine complexity class $\PP$. 
In a $\UPPcc$ communication
protocol for a Boolean function $f(x, y)$, there are two parties, one with input $x$ 
and one with input $y$. The two parties engage in a private-coin randomized communication protocol,
at the end of which they are required to output $f(x,y)$ with probability strictly greater than $1/2$.
The cost of the protocol is the number of bits exchanged by the two parties. 
As is standard, we use the notation $\UPPcc$
not only to denote the communication model, but also the class of functions
solvable in the model by protocols of cost polylogarithmic in the size of the input.

Observe that success probability $1/2$ can be achieved with no communication by random guessing,
so the  $\UPPcc$ model merely requires a strict improvement over this trivial solution. 
Owing to this liberal acceptance criterion, 
$\UPPcc$ is a very powerful communication model, essentially the most powerful one against which we know
how to prove lower bounds. In particular, $\UPPcc$ is powerful enough to simulate
many other models of computing, and this makes $\UPPcc$ \emph{lower bounds} highly useful. As one example,
any function $f(x, y)$ computable by a Threshold-of-Majority circuit of size $s$
has $\UPPcc$ complexity at most $O(\log s)$, and this connection has been used to translate $\UPPcc$ lower bounds
into state of the art lower bounds against threshold circuits (see, for example,~\cite{Forster02, RS10, CM18, SheW19, BT16}).

$\UPPcc$ also happens to be characterized by a natural matrix-analytic complexity measure
called \emph{sign-rank} \cite{PS86}. Here,
the sign-rank of a matrix $M \in \{-1, 1\}^{N \times N}$  is the minimum rank of a real matrix whose entries agree in sign with $M$.  Equivalently,
$\sr(M) := \min_A\textnormal{rk}(A)$, where the minimum is over all matrices $A$ such that $A_{i, j} \cdot M_{i, j} > 0$ for all $i, j \in [N]$. 
Paturi and Simon~\cite{PS86} showed the following tight connection between $\UPPcc$ and sign-rank: if we associate a function $f(x, y)$ with the matrix $M=[f(x, y)]_{x, y}$, then 
the $\UPPcc$ communication complexity of $f$ equals $\log(\sr(M)) \pm \Theta(1)$.

While lower bounds on $\UPPcc$ complexity (equivalently, sign-rank) are useful
in complexity theory, \emph{upper bounds} on these quantities imply state of the art learning algorithms, including the
fastest known algorithms for PAC learning DNFs and read-once formulas \cite{ksdnf, ambainis2010any}. More specifically,
suppose we want to learn a concept class $\mathcal{C}$ of functions mapping $\{-1, 1\}^n$ to $\{-1, 1\}$.
$\mathcal{C}$ is naturally associated with a $|\mathcal{C}| \times 2^n$ matrix $M$,
whose $i$th row equals the truth table of the $i$th function in $\mathcal{C}$.
Then $\mathcal{C}$ can be distribution-independently PAC learned in time polynomial in the sign-rank of $M$. (The sign-rank
of $M$ is often referred to in the learning theory literature as the \emph{dimension complexity} of $\mathcal{C}$.)
Moreover, the resulting learning algorithm is robust to random classification noise, a property not satisfied
by the handful of known PAC learning algorithms that are \emph{not} based on dimension complexity. 

For the purpose of our work, one particularly important application
of the dimension-complexity approach to PAC learning was derived by Klivans et al.~\cite{KOS04}, who showed that the concept class consisting 
of intersections of 2 majority functions has dimension complexity at most ${\binom{n}{O(\log n)}} \leq n^{O(\log n)}$.
They thereby obtained a quasipolynomial time algorithm for PAC learning intersections of two majority functions.\footnote{
In fact, their algorithm runs in quasipolynomial time for intersections of polylogarithmic many majorities.} 
Prior to our work, it was consistent with current knowledge that the dimension complexity of this concept class is in
fact $\poly(n)$, which would yield a polynomial time PAC learning algorithm for intersections of constantly many majority functions.

\subsection{Our Results}
Despite considerable effort, progress on understanding sign-rank (equivalently, $\UPPcc$) has been slow.
Our lack of knowledge is highlighted via the following well-known open question (cf.~G\"{o}\"{o}s et al.~\cite{GPW18}).
Throughout, for any function $f \colon \bits^n \to \bits$, $f \wedge f$ denotes the function on twice as many inputs obtained by
evaluating $f$ on two disjoint inputs and outputting $-1$ only if both copies of $f$ evaluate to $-1$, i.e., $\left(f \wedge f\right)(x_1, x_2):=
f(x_1) \wedge f(x_2)$.

\begin{question} \label{firstquestion} 
Is the class $\UPPcc$ closed under intersection? In other words, suppose the function
$f(x, y) \colon \bits^{n} \times \bits^{n} \to \bits$ satisfies $\UPPcc(f) = O( \left(\log n\right)^c)$ for some constant $c$.
Is there always some constant $c_1$ (which may depend on $c$) such that $\UPP(f \wedge f) \leq O\left(\left(\log n\right)^{c_1}\right)$?
More generally and informally, if $\UPPcc(f)$ is ``small'', does this imply any non-trivial upper bound on $\UPPcc(f \wedge f)$?
\end{question}

\vspace{-1.5mm}
Prior to our work, essentially nothing was known about Question \ref{firstquestion}.
In particular, we are not aware of prior work ruling out the possibility that $\UPPcc(f \wedge f) \leq O(\UPPcc(f))$. 
On the other hand, for reasons that will become apparent in Section \ref{s:techniques},
there is good reason to suspect that there exists a function $f$ with $\UPPcc(f)=O(\log n)$, yet $\UPPcc(f \wedge f) \geq \Omega(n)$.
While we do not obtain a full resolution of Question \ref{firstquestion}, we do show for the first time
that $\UPPcc$ complexity can increase significantly under intersection. 

Babai, Frankl and Simon~\cite{BFS86} observed that there are two natural communication complexity analogs of the Turing machine class $\PP$, namely $\PPcc$ and $\UPPcc$.  It is well known~\cite{BRS95} that $\PPcc$ \emph{is} closed under intersection. Our work can be viewed as a first step towards showing that, in contrast, $\UPPcc$ is \emph{not}  closed under intersection.

\begin{theorem} \label{intromain}
There is a function $f(x, y) \colon \bits^{n} \times \bits^{n} \to \bits$ such that $\UPPcc(f) = O(\log n)$,
yet $\UPPcc(f \wedge f) = \Theta(\log^2 n)$. 
\end{theorem}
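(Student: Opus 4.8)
The plan is to realize $f$ as the \emph{communication lift} (pattern matrix) of the single Boolean function $\MAJ_n$, so that $f \wedge f$ becomes the lift of $(a,b)\mapsto \MAJ_n(a) \wedge \MAJ_n(b)$, i.e.\ the function ``intersection of two majorities.'' Concretely, Alice's input encodes a string together with a selector and Bob's encodes a sub-selection, arranged so that the $(x,y)$ entry of $M_f$ is $\MAJ_n$ evaluated on the bits of $x$ selected by $y$. Two features of this choice drive everything. First, the sign-rank of the pattern matrix of a function $g$ on $N$ bits is $\poly(N)^{O(\deg_\pm(g))}$ (expand a degree-$\deg_\pm(g)$ sign-representing polynomial for $g$ into a low-rank sign-representation of $M_g$), so since $\deg_\pm(\MAJ_n) = 1$ we get $\sr(M_f) = \poly(n)$ and hence $\UPPcc(f) = O(\log n)$. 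Second, $f \wedge f$ is, up to routine bookkeeping, exactly the pattern matrix of $(a,b)\mapsto \MAJ_n(a)\wedge \MAJ_n(b)$, whose rows enumerate intersections of two majorities, so $\UPPcc(f \wedge f)$ is squeezed between the easy and hard directions below.

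For the upper bound $\UPPcc(f \wedge f) = O(\log^2 n)$ I would reprove the Klivans--O'Donnell--Servedio \cite{KOS04} bound in the form $\deg_\pm(\AND \circ \MAJ_n) = O(\log n)$: take a degree-$O(\log n)$ rational function $p/q$ with $q > 0$ that sign-represents $\sgn(s)$ on $s \in \{\pm 1, \dots, \pm n\}$ to accuracy $\tfrac15$, and check that $p(s)q(t) + q(s)p(t) + p(s)p(t) - q(s)q(t)$ is a degree-$O(\log n)$ polynomial whose sign equals $\AND(\sgn s, \sgn t)$; a Minsky--Papert symmetrization (averaging over permutations within each block) reduces any sign-representation of $\AND \circ \MAJ_n$ to exactly this bivariate-grid problem. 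Feeding $\deg_\pm(\AND \circ \MAJ_n) = O(\log n)$ into the pattern-matrix upper bound gives $\sr(M_{f \wedge f}) \le \poly(n)^{O(\log n)} = n^{O(\log n)}$, i.e.\ $\UPPcc(f \wedge f) = O(\log^2 n)$; this is also the KOS dimension-complexity bound $\binom{n}{O(\log n)}$, realized by the feature map of all monomials of degree at most $O(\log n)$.

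The crux is the matching lower bound $\sr(M_{f \wedge f}) \ge n^{\Omega(\log n)}$, equivalently that the dimension complexity of intersections of two majorities is genuinely $n^{\Omega(\log n)}$. The first ingredient is a threshold-degree lower bound $\deg_\pm(\AND \circ \MAJ_n) = \Omega(\log n)$, proved by LP duality: after symmetrization one must exhibit a signed measure on the grid $\{-n, \dots, n\}^2$ orthogonal to all bivariate monomials of degree $o(\log n)$ yet correlated with the sign pattern of the positive-quadrant indicator --- such a dual witness should come from smoothed Chebyshev/Zolotarev extremal weights, encoding the fact that sign-representing a $\sgn$ function over $\Omega(n)$ points requires rational (hence numerator) degree $\Omega(\log n)$. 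The second, harder, ingredient is to convert this into the sign-rank bound $n^{\Omega(\log n)}$ rather than the mere $2^{\Omega(\deg_\pm)} = \poly(n)$ that a black-box application of the Razborov--Sherstov pattern-matrix method would yield. I would obtain the stronger exponent by keeping the dual witness \emph{smooth} (nearly uniform, or bounded below on a $1-o(1)$ fraction of inputs) and composing the lift with a gadget of polynomial size, so that the amplification in the Razborov--Sherstov argument effectively multiplies $\deg_\pm$ by $\Theta(\log n)$; intuitively, the $\AND$ forces the two rational sign-representations to be ``calibrated'' across a magnitude range of size $\poly(n)$, and each factor of calibration costs an independent $n^{\Omega(1)}$ in the rank.

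I expect the main obstacle to be this second ingredient: producing a dual polynomial for $\deg_\pm(\AND \circ \MAJ_n) = \Omega(\log n)$ that is simultaneously orthogonal to low-degree monomials, well-correlated with the target, \emph{and} smooth enough that the pattern-matrix lift loses only constant factors in the degree while gaining the extra $\log n$ in the exponent. Ordinary threshold-degree dual witnesses give the first two properties but are typically concentrated on few inputs; upgrading to a smooth witness --- for instance by convolving the extremal witness with the uniform distribution and re-correcting orthogonality --- is where the real work lies, and it is also the step that appears to cap the argument at $\log^2 n$ rather than the conjectured $\Omega(n)$ separation.
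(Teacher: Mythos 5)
Your high-level architecture matches the paper's: take $f$ to be the pattern matrix (communication lift) of $\MAJ$, observe that $f\wedge f$ is then the pattern matrix of $\AND_2\circ\MAJ$, get the $O(\log^2 n)$ upper bound from Klivans--O'Donnell--Servedio, and get the matching $n^{\Omega(\log n)}$ sign-rank lower bound by exhibiting a \emph{smooth} dual witness and feeding it into Razborov--Sherstov. You are also right that the smooth dual witness is the crux and the genuine obstacle. But your account of that step is where the proposal has a real gap.

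First, the route. You propose to construct, directly, a smooth threshold-degree dual witness for $\AND\circ\MAJ$ --- a signed measure on the bivariate grid $\{-n,\dots,n\}^2$, possibly by ``convolving the extremal witness with uniform and re-correcting orthogonality.'' The paper does something structurally different and, crucially, more tractable: it constructs a smooth dual witness for the \emph{rational} approximate degree of the univariate $\sgn$ function on $\{0,\pm1,\dots,\pm n\}$ (Theorem~\ref{thm:smooth-dual}), lifts it to a smooth rational-degree dual pair $(\psi_0,\psi_1)$ for $\MAJ_{2n}$ (Theorem~\ref{thm: univtomultivphd}), and only then invokes Sherstov's composition theorem (Theorem~\ref{thm: srdegtosthrdeg}, from \cite{Sherstov14}) to get a threshold-degree dual witness $\Psi$ for $\OR_2\circ\MAJ_{2n}$. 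The essential new observation in the paper is that this composition \emph{preserves smoothness}: conclusions \eqref{thm7conc2}--\eqref{thm7conc4} propagate the pointwise lower bound $|\psi_0(x_i)|\ge\Omega(n^{-20}2^{-2n})$ on the middle Hamming-weight band into the same kind of lower bound for $\Psi$. Nothing in your sketch locates this detour through rational approximate degree or the fact that the Sherstov transformation transports smoothness; without it, you would have to carry out the smoothing directly on a bivariate witness for $\AND\circ\MAJ$, which is much harder.

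Second, even at the univariate level, ``convolve then re-correct orthogonality'' is not the paper's construction and is unlikely to work as stated: re-imposing orthogonality to low-degree polynomials after convolution generically breaks the sign-agreement constraint $R(t)\ge\delta|R(-t)|$, and re-imposing sign agreement breaks orthogonality. The paper instead builds smoothness in from the start: for each base point $u$ in the middle range it writes an explicit geometrically-spaced dual witness $p_u$ supported on $\{\pm u,\pm u\Delta,\dots,\pm u\Delta^{d-1}\}$ (with $\Delta=\lfloor n^{1/(3d)}\rfloor$), proves via Claims~\ref{claim:middle-big}--\ref{claim:approx-symmetry} that $p_u$ concentrates mass on $\pm u$, and then takes a $u^{20}$-weighted average $P=\sum_u u^{20}p_u/\|p_u\|_1$ so that every $t$ in $\{0,\pm1,\dots,\pm\lfloor n^{2/3}\rfloor\}$ gets mass $\Omega(n^{-20})$. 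This is an additive mixture of many one-shot witnesses with carefully chosen weights, not a perturbation of a single extremal witness. That construction, together with the smoothness-preserving version of Sherstov's theorem, is the entire technical content of the paper and is absent from your proposal; as written, your plan stops exactly where the proof begins.
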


In fact, for each fixed $x \in \bits^n$, the function $f(x, y)$
from Theorem \ref{intromain} simply outputs the majority of some subset of the bits of $y$.
This yields the following corollary. 

\begin{corollary} \label{cor: learn}
Let $\mathcal{C}$ be the concept class in which each concept is the intersection of two majorities on $n$ bits. 
Then $\mathcal{C}$ has dimension complexity $n^{\Theta(\log n)}$.
\end{corollary}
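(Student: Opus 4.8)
The plan is to obtain Corollary~\ref{cor: learn} as a fairly direct consequence of Theorem~\ref{intromain} together with the Paturi--Simon characterization $\UPPcc(g) = \log \sr(M_g) \pm \Theta(1)$. The upper bound $n^{O(\log n)}$ on the dimension complexity is precisely the bound of Klivans, O'Donnell and Servedio~\cite{KOS04} (this is, after all, the bound we claim to match), so essentially all the work is in establishing the matching lower bound $n^{\Omega(\log n)}$.

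For the lower bound I would use the extra structural feature of the hard function $f$ from Theorem~\ref{intromain}: for every fixed Alice input $x \in \bits^n$ there is a subset $S_x \subseteq [n]$ with $f(x, y) = \MAJ_{S_x}(y)$, the majority of the bits of $y$ indexed by $S_x$. Fixing a pair $(x_1, x_2)$, the resulting Bob-side function is $(y_1, y_2) \mapsto \MAJ_{S_{x_1}}(y_1) \wedge \MAJ_{S_{x_2}}(y_2)$, which --- after identifying $(y_1, y_2) \in \bits^n \times \bits^n$ with a single string of $2n$ bits --- is an intersection of two majorities on $2n$ bits. Hence every row of the communication matrix $M_{f \wedge f}$ (rows indexed by $(x_1, x_2)$, columns by $(y_1, y_2)$) is the truth table of some concept in the class $\mathcal{C}_{2n}$ of intersections of two majorities on $2n$ bits, and the columns of $M_{f \wedge f}$ range over all of $\bits^{2n}$. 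In other words, $M_{f \wedge f}$ is obtained from the concept-class matrix of $\mathcal{C}_{2n}$ by selecting a (multi)set of its rows. Since sign-rank cannot increase when passing to a submatrix (and is invariant under duplicating rows), $\sr(M_{f \wedge f})$ is at most the dimension complexity of $\mathcal{C}_{2n}$.

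It remains to lower bound $\sr(M_{f \wedge f})$, and here Theorem~\ref{intromain} does the job: $\UPPcc(f \wedge f) = \Theta(\log^2 n)$, so by Paturi--Simon $\sr(M_{f \wedge f}) = 2^{\Theta(\log^2 n)} = n^{\Theta(\log n)}$, and in particular $\sr(M_{f \wedge f}) \ge n^{\Omega(\log n)}$. Combining with the previous paragraph, the dimension complexity of $\mathcal{C}_{2n}$ is $n^{\Omega(\log n)}$; replacing the parameter $2n$ by $n$ only changes the constant hidden in the exponent (since $(n/2)^{\Omega(\log(n/2))} = n^{\Omega(\log n)}$), so the dimension complexity of the class $\mathcal{C}$ of intersections of two majorities on $n$ bits is $n^{\Omega(\log n)}$. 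Together with the $n^{O(\log n)}$ upper bound of~\cite{KOS04}, this yields $n^{\Theta(\log n)}$.

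The derivation itself presents no real difficulty --- the substance is entirely in the proof of Theorem~\ref{intromain}. The one point requiring care is that we genuinely need the hard function $f$ of Theorem~\ref{intromain} to have the ``dictator-indexed majority'' form described above (fixing Alice's input leaves a majority over a subset of Bob's bits), since it is exactly this property that lets us read the rows of $M_{f \wedge f}$ as concepts in $\mathcal{C}$. I would therefore make sure the construction behind Theorem~\ref{intromain} is stated with this structural guarantee made explicit, after which the corollary follows as above.
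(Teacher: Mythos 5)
Your proposal is correct and matches the paper's own (very brief) derivation: the paper states that Corollary~\ref{cor: learn} ``follows immediately from the previous proof and the definition of pattern matrices,'' and what you have written simply unpacks that observation --- namely that $M_{f\wedge f}$ (or its transpose, depending on which side of the pattern matrix you call ``concepts'') embeds into the concept-class matrix for intersections of two majorities, so the sign-rank lower bound from Theorem~\ref{intromain} transfers to the dimension complexity, and the upper bound is already~\cite{KOS04}. The only cosmetic point is that in the actual pattern-matrix construction the ``majority of a subset of bits'' structure appears on the side holding the raw string $x$ (with the $(S,w)$-holder selecting the subset and mask), which is the transpose of how you phrased it; since sign-rank is transpose-invariant this has no effect on the argument.
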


Corollary~\ref{cor: learn} shows that the dimension complexity upper bound of Klivans et al.~\cite{KOS04} is tight for intersections 
of two majorities, and new approaches will be needed to PAC learn this concept class in polynomial time.
For context, we remark that learning intersections of majorities is a special case of the more general problem of 
learning intersections of many \emph{halfspaces}.\footnote{A halfspace is any function of the form $\sgn\left(\sum_{i=1}^n w_i \cdot x_i + w_0\right)$
for some real numbers $w_0, \dots, w_n$.}
The latter is a central and well-studied challenge in learning theory, as intersections of halfspaces are powerful enough to represent any convex set,
and they contain many basic problems (like learning DNFs) as special cases.
In contrast to the well-understood problem of learning a single halfspace, for which many efficient algorithms are known,
no $2^{o(n)}$-time algorithm is known for PAC learning  even the intersection of two halfspaces. There have been considerable 
efforts devoted to showing that learning intersections of halfspaces is a hard problem \cite{cryptohardness, feldman2006new,
khot2011hardness, bhattacharyya2017hardness}, but these results apply only to intersections
of many halfspaces, or make assumptions about the form of the output hypothesis of the learner. Our work can be seen
as a new form of evidence that learning intersections of even two majorities is hard.

\vspace{-2mm}
\subsection{Our Techniques}
\label{s:techniques}
$\UPPcc$ has a \emph{query complexity} analog, denoted $\UPPdt$ and defined as follows.
A $\UPPdt$ algorithm is a randomized algorithm which on input $x$, queries bits of $x$, and must output $f(x)$
with probability strictly greater than $1/2$; the cost of the protocol is the number of bits of $x$ queried.  
How $\UPPdt$ behaves under intersection is now well understood. More specifically,
it is known~\cite{Sherstov13b} that 
there is a function $f \colon \bits^n \to \bits$ (in fact, a halfspace) such that $\UPPdt(f)=O(1)$,
yet $\UPPdt(f \wedge f)= \Theta(n)$. Define the Majority function, which we denote by $\MAJ$, to be $-1$ if at least half of its input bits are $-1$. It is also known~\cite{Sherstov13} that $\MAJ$ satisfies 
$\UPPdt(\MAJ)=O(1)$,
yet $\UPPdt\left(\MAJ \wedge \MAJ\right)= \Theta(\log n)$.
Our goal in this paper is, to the extent possible, to show that the $\UPPcc$ communication model behaves similarly to
its query complexity analog. 

Over the course of the last decade, there has been considerable
progress in proving \emph{lifting theorems}~\cite{Sherstov11, GPW17, GKPW17}. These theorems seek
to show that if a function $f$ has large complexity in some query model $\mathbf{C}$,
then for some ``sufficiently complicated'' function $g$ on a ``small'' number of inputs, the composition $f \circ g$
has large complexity in the associated communication model (ideally, $\mathbf{C}^{\text{cc}}(f \circ g) \gtrsim \mathbf{C}^{dt}(f)$). 

Unfortunately, a ``generic'' lifting theorem for $\UPP$ complexity is not known.
That is, it is not know how to take an arbitrary function $f$ with high $\UPPdt$
complexity, and by composing it with a function $g$ on a small number of inputs,
yield a function with high $\UPPcc$ complexity.

However, as we now explain, some significant partial results have been shown in this direction.
It is well-known that $\UPPdt(f)$ is equivalent to an 
approximation-theoretic notion called \emph{threshold degree}, denoted $\deg_{\pm}(f)$ (see Appendix \ref{app:tdeg} for the definition).
The threshold degree of $f$ can in turn be expressed as the value of a 
certain (exponentially large) linear program. Linear programming duality
then implies that one can prove lower bounds on $\deg_{\pm}(f)$
by exhibiting good solutions to the dual linear program. We refer to such dual solutions as \emph{dual witnesses} for threshold degree. Sherstov 
\cite{sherstovsymm} and Razborov and Sherstov~\cite{RS10} showed that if $\deg_{\pm}(f)$
is large, and moreover this can be exhibited by a dual witness satisfying a certain \emph{smoothness}
condition, then there is a function $g$ defined on a constant number of inputs such that $f \circ g$ \emph{does} have large 
$\UPPcc$ complexity. Several recent works~\cite{BT16, BCHTV17, BT18, SheW19} have managed to prove new $\UPPcc$ lower bounds
by constructing, for various functions $f$, smooth dual witnesses exhibiting the fact that $\deg_{\pm}(f)$ is large.

Our key technical contribution is to bring this approach to bear on the function $F(x, y) = \MAJ(x) \wedge \MAJ(y)$.
Specifically, we show that the (known) threshold degree lower bound $\deg_{\pm}(F) \geq \Omega(\log n)$ can be 
exhibited by a smooth dual witness. 

We do this as follows.
Sherstov~\cite{Sherstov13} showed that for any function $f \colon \bits^n \to \bits$, the threshold degree of the function $F = f \wedge f$ is characterized
by the \emph{rational} approximate degree of $f$, i.e., the least total degree of real polynomials $p$ and $q$ such that $|f(x)-p(x)/q(x)| \leq 1/3$
for all $x \in \bits^n$. He then showed that the rational approximate degree of $\MAJ$ is $\Omega(\log n)$, thereby
concluding that $F(x, y)$ has threshold degree $\Omega(\log n)$.  \label{s:ratdeg}

From Sherstov's arguments, one can derive a dual witness $\psi$ for the fact that the rational approximate degree of $\MAJ$ is $\Omega(\log n)$,
and then transform $\psi$ into a dual witness $\phi$ for the fact that $F(x, y)$ has threshold degree $\Omega(\log n)$. Unfortunately,
neither $\psi$ nor $\phi$ satisfies the type of smoothness condition required by Razborov and Sherstov's machinery to yield $\UPPcc$ lower bounds.

The smoothness condition required for the Razborov-Sherstov machinery to work essentially states that the the mass of the dual witness $\psi$ has to be ``relatively large'' (a reasonably large fraction of what mass the uniform distribution would have placed) on a ``large'' set of inputs (the fraction of inputs which do not have large mass has to be small).

To construct a \emph{smooth} dual witness $\psi'$ for $F$, our primary technical contribution 
is to construct a \emph{smooth} dual witness $\phi'$ for the fact that the rational approximate degree of $\MAJ$ is $\Omega(\log n)$.
We then apply a different transformation, due to Sherstov~\cite{Sherstov14}, of $\phi'$ into a dual witness for the fact that the threshold degree of $F$ is $\Omega(\log n)$,
and we show that this transformation \emph{preserves the smoothness} of $\psi'$.

In a nutshell, our smooth dual witness for $\MAJ$ is obtained in two steps: first we define for all inputs $x$ whose Hamming weight lies in $[n/2- \lfloor n^{2/3} \rfloor, n/2 + \lfloor n^{2/3} \rfloor]$, a dual witness $\phi'_x$ that places a large mass on $x$ and not too much mass on other points.
Next, we define the final dual witness $\phi'(x)$ to be a certain weighted average over $x$ of all the dual witnesses thus obtained.
The resulting mass on $\phi'(x)$ for each $x$ of Hamming weight in $[n/2 - \lfloor n^{2/3}\rfloor, n/2 + \lfloor n^{2/3}\rfloor]$ is large enough, and the fraction of inputs whose Hamming weight is not in $[n/2 -\lfloor n^{2/3}\rfloor, n/2 +\lfloor n^{2/3}\rfloor]$ is small enough, to allow us to use the Razborov-Sherstov framework (Theorem~\ref{thm: sthrdegtosr}) to prove the desired sign-rank lower bound on the pattern matrix of $F$.

\section{Preliminaries}

\label{s:prelims}
All logarithms in this paper are taken base 2.  We use the notation $\exp(x)$ to denote $e^x$, where $e$ is Euler's number.  Given any finite set $X$ and any functions $f, g : X \to \R$, define $\|f\|_1 := \sum_{x \in X}|f(x)|$ and $\langle f, g \rangle := \sum_{x \in X}f(x)g(x)$. We refer to $\|f\|_1$ as the $\ell_1$-norm of $f$.  For any $x \in \pmone^{n}$, we use the notation $|x|$ to denote the \emph{Hamming weight} of $x$, which is the number of $-1$'s in the string $x$.

Paturi and Simon~\cite{PS86} showed the following equivalence between the sign-rank of a matrix and the $\UPPcc$ cost of its corresponding communication game.

\begin{theorem}\label{thm: ps}
For any $F : \bra{-1, 1}^{2n} \!\times\! \bra{-1, 1}^n\! \to \! \bra{-1, 1}$,
let $M_F$ denote its \emph{communication matrix}, defined by $M_F(x, y) = F(x, y)$.  Then,
$\UPPcc(F) = \log \sr(M_F) \pm O(1)$.
\end{theorem}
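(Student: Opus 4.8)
The plan is to prove the two inequalities $\UPPcc(F) \le \log \sr(M_F) + O(1)$ and $\log \sr(M_F) \le \UPPcc(F) + O(1)$ separately, each by a direct simulation argument.

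For the first inequality, suppose $\sr(M_F) = r$, so there is a rank-$r$ real matrix $A$ with $A(x,y)\cdot M_F(x,y) > 0$ for all $x,y$. Factor $A = UV^\top$ with row vectors $u_x, v_y \in \R^r$, so that $\sgn\langle u_x, v_y\rangle = M_F(x,y)$. Rescaling the rows of $U$ and $V$ by positive scalars preserves all entrywise signs, so I may assume $\|v_y\|_2 = 1$ for every $y$; and $u_x \neq 0$ for every $x$, since $u_x = 0$ would force $\langle u_x, v_y\rangle = 0$, contradicting $M_F(x,y) \in \bits$. Now I build a private-coin protocol: Alice, on input $x$, samples a coordinate $i \in [r]$ with probability $|u_x(i)|/\|u_x\|_1$ and sends Bob the pair $(i, \sgn(u_x(i)))$, using $\lceil \log r\rceil + 1$ bits; Bob, on input $y$, outputs $\sgn(u_x(i))\cdot\sgn(v_y(i))$ with probability $|v_y(i)| \le 1$ and an unbiased random bit otherwise. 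A one-line computation shows $\E[\text{output}] = \langle u_x, v_y\rangle / \|u_x\|_1$, which is nonzero and has the same sign as $M_F(x,y) = F(x,y)$; hence the protocol outputs $F(x,y)$ with probability strictly greater than $1/2$, at cost $\log r + O(1) = \log \sr(M_F) + O(1)$.

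For the second inequality, suppose there is a private-coin protocol $\Pi$ of cost $c$ that outputs $F(x,y)$ with probability strictly greater than $1/2$, and let $P(x,y)$ denote the probability that $\Pi$ outputs $-1$ on $(x,y)$. I invoke the standard rectangle property of private-coin protocols: for each transcript (leaf) $t$ of $\Pi$, the probability of producing $t$ on input $(x,y)$ factors as $a_t(x)\,b_t(y)$, because Alice's messages depend only on $x$ and her private coins and Bob's only on $y$ and his. Summing over the at most $2^c$ transcripts whose output label is $-1$ gives $P(x,y) = \sum_t a_t(x)\,b_t(y)$, so the matrix $[P(x,y)]_{x,y}$ has rank at most $2^c$ and $[\tfrac12 - P(x,y)]_{x,y}$ has rank at most $2^c + 1$. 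By correctness and the strict inequality in the acceptance criterion, $\sgn(\tfrac12 - P(x,y)) = F(x,y)$ for all $(x,y)$, so this matrix sign-represents $M_F$ and $\sr(M_F) \le 2^c + 1$, i.e. $\log \sr(M_F) \le \UPPcc(F) + O(1)$.

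The main obstacle is the first direction: turning a low-rank sign-representation into a cheap private-coin protocol requires choosing the normalizations carefully (unit $\ell_2$-norm for the $v_y$, so that $|v_y(i)|$ is a valid probability, and $\ell_1$-proportional sampling for the $u_x$) and checking that the resulting expected output has exactly the right sign and is strictly nonzero — the latter relying on the strict inequality $A(x,y)\cdot M_F(x,y) > 0$. The converse direction is the essentially routine observation that a private-coin protocol's acceptance probability is a sum of few rank-one products.
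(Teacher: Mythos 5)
The paper does not actually prove Theorem~\ref{thm: ps}; it is imported as a black box from Paturi and Simon~\cite{PS86}, so there is no ``paper's own proof'' to compare against. Your self-contained argument is correct and is essentially the standard two-sided proof of the Paturi--Simon equivalence: for the upper bound on $\UPPcc(F)$, factor a sign-representing matrix of rank $r$ as $UV^{\top}$, normalize the $v_y$ to unit $\ell_2$-norm so that $|v_y(i)|$ is a legal acceptance probability, have Alice sample a coordinate $\ell_1$-proportionally from $u_x$ and send it with a sign bit, and observe that Bob's randomized rounding gives expected output $\langle u_x, v_y\rangle/\|u_x\|_1$, which is nonzero and has the correct sign (the strictness in the sign-representation is exactly what makes the advantage strictly positive); for the lower bound, factor each transcript probability of a private-coin protocol as $a_t(x)b_t(y)$, sum over rejecting transcripts to get a rank-$2^c$ matrix $P$, and note that $\tfrac12 - P$ (rank at most $2^c+1$) sign-represents $M_F$ by the strict acceptance criterion. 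Both halves are standard and correctly executed. A small cosmetic point: you should say explicitly that rescaling $v_y \mapsto v_y/\|v_y\|_2$ is harmless because it rescales a column of $A$ by a positive constant, and that $v_y \neq 0$ for the same reason $u_x \neq 0$; you state both implicitly but they are worth a sentence in a polished write-up.
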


Let $n, N$ be positive integers such that $n$ divides $N$. Partition the set $[N] := \{1, \dots, N\}$ into $n$ disjoint blocks $\bra{1, 2, \dots, N/n}, \bra{N/n + 1, \dots, 2N/n}, \dots, \bra{(n-1)N/n + 1, \dots, N}$.  Define the set $\mathcal{P}(N, n)$ to be the collection of subsets of $[N]$ which contain exactly one element from each block.  For $x \in \pmone^n$ and $S \in \mathcal{P}(N, n)$, let $x|_S = (x_{s_1}, \dots, x_{s_n})$, where $s_1 < s_2 < \cdots < s_n$ are the elements of $S$.
\begin{definition}[Pattern matrix]\label{defn: pm}
For any function $\phi : \pmone^n \to \R$, the $(N, n, \phi)$-pattern matrix $M$ is defined as follows.
\[
M = [\phi(x|_S) \oplus w]_{x \in \pmone^{N}, (S, w) \in \mathcal{P}(N, n) \times \pmone^{n}}.
\]
Note that $M$ is a $2^N \times (N/n)^n 2^n$ matrix.
\end{definition}

In a breakthrough result, Forster~\cite{Forster02} proved that an upper bound on the spectral norm of a sign matrix implies a lower bound on its sign-rank.
Razborov and Sherstov~\cite{RS10} established a generalization of Forster's theorem~\cite{Forster02} that can be used to prove sign-rank lower bounds for pattern matrices.  
Specifically, we require the following result, implicit in their work~\cite[Theorem 1.1]{RS10}.
\begin{theorem}[Implicit in~\cite{RS10}]\label{thm: sthrdegtosr}
Let $f : \pmone^n \rightarrow \bra{-1, 1}$ be any Boolean function and $\alpha > 1$ be a real number.  Suppose there exists a function $\phi : \pmone^n \rightarrow \R$ satisfying the following conditions.
\begin{itemize}
    \item 
    $\sum_{x \in \pmone^n}|\phi(x)| = 1.$
    \item
    For all polynomials $p$ of degree at most $d$,
    $\sum_{x \in \pmone^n}\phi(x)p(x) = 0.$
    \item
    $f(x) \cdot \phi(x) \geq 0~\forall x \in \pmone^n.$
    \item $|\phi(x)| \geq \gamma$ for all but a $\Delta$ fraction of inputs $x \in \pmone^n$.
\end{itemize}
Then, the sign-rank of the $(N, n, f)$-pattern matrix $M$ can be bounded below as 
\[
\sr(M) \geq \frac{\gamma}{\frac{1}{2^n}\left(\frac{n}{N}\right)^{d/2} + \gamma\Delta}.
\]
\end{theorem}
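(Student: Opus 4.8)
The plan is to assemble the bound from two classical tools: Forster's spectral lower bound for sign-rank, obtained through his isotropic-normalization lemma~\cite{Forster02}, and Sherstov's pattern matrix lemma, which identifies the singular values of the $(N,n,\phi)$-pattern matrix with the Fourier coefficients $\widehat\phi(S)$, each damped by a factor $(n/N)^{|S|/2}$. The only genuinely new ingredients are to test a hypothetical low-rank sign-representative of $M$ against the matrix of \emph{values} of the dual witness $\phi$ (rather than against $M$ itself, as in bare Forster), and to route the ``non-smooth'' $\Delta$-fraction of inputs into the additive $\gamma\Delta$ term in the denominator.

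Concretely, suppose $\sr(M)=r$ and fix a real matrix $B$ with $\sgn(B)=M$ entrywise and $\mathrm{rank}(B)=r$. Since no row or column of $M$ is zero, Forster's lemma lets me bring $B$ to isotropic position: after an invertible change of basis on the rank-$r$ factorization, the row vectors become unit vectors $a_x$ with $\sum_x a_x a_x^{T}=(2^{N}/r)I$; rescaling each column vector to unit length as well (again harmless, as no column of $M$ vanishes) yields $|B_{x,(S,w)}|\le 1$ at every entry and $\|B\|_F^2 = K/r$, where $K=2^{N}(N/n)^{n}2^{n}$ is the number of entries of $M$.

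Next I would test $B$ against $\Psi$, the matrix with $\Psi_{x,(S,w)}=\phi(z)$, where $z\in\pmone^{n}$ is the ``pattern string'' underlying the entry $(x,(S,w))$, so that $M_{x,(S,w)}=f(z)$. The sign hypothesis $f\cdot\phi\ge 0$ gives $\langle B,\Psi\rangle=\sum|B_{x,(S,w)}|\,|\phi(z)|$. Splitting this sum according to whether $|\phi(z)|\ge\gamma$, and using that every fixed $z$ is the pattern of exactly $2^{N}(N/n)^{n}$ entries --- hence at most $\Delta K$ entries are ``bad'' --- together with $B_{x,(S,w)}^2\le|B_{x,(S,w)}|\le 1$, I get $\langle B,\Psi\rangle\ge\gamma\bigl(\|B\|_F^2-\Delta K\bigr)=\gamma K\bigl(1/r-\Delta\bigr)$. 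For the matching upper bound I use trace/operator-norm duality and $\|B\|_{\mathrm{tr}}\le\sqrt{r}\,\|B\|_F$, so $\langle B,\Psi\rangle\le\|B\|_{\mathrm{tr}}\,\|\Psi\|\le\sqrt{K}\,\|\Psi\|$; and Sherstov's pattern matrix lemma gives $\|\Psi\|=\sqrt{K}\cdot\max_{S\subseteq[n]}\bigl\{|\widehat\phi(S)|\,(n/N)^{|S|/2}\bigr\}$, where the degree-$d$ orthogonality kills all terms with $|S|\le d$, the normalization $\|\phi\|_1=1$ forces $|\widehat\phi(S)|\le 2^{-n}$, and $(n/N)^{|S|/2}\le(n/N)^{d/2}$ on the survivors, so $\|\Psi\|\le\sqrt{K}\cdot 2^{-n}(n/N)^{d/2}$. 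Chaining the two estimates gives $\gamma K(1/r-\Delta)\le K\cdot 2^{-n}(n/N)^{d/2}$; the case $1/r\le\Delta$ makes the target bound trivial (its right-hand side is then below $1/\Delta\le r$), and otherwise rearranging yields $r\ge\gamma/\bigl(2^{-n}(n/N)^{d/2}+\gamma\Delta\bigr)$.

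I expect the main obstacle to be the bookkeeping around the $\Delta$-fraction, not any deep idea: one must verify that the small-magnitude inputs of $\phi$ spread perfectly evenly over the pattern matrix --- each string $z$ appears as the pattern of the same number of entries --- so that their contribution to $\langle B,\Psi\rangle$ is at most a $\Delta$-fraction of $K$, which is what makes the loss in Forster's bound additive ($+\gamma\Delta$) rather than multiplicative. Two smaller points also need attention: that the row-isotropic and column-unit normalizations can be imposed simultaneously (they act on opposite sides of the factorization), and that ``orthogonal to every polynomial of degree $\le d$'' is precisely ``$\widehat\phi(S)=0$ for all $|S|\le d$'', so that the $(n/N)^{|S|/2}$ damping in the pattern matrix lemma first takes effect at level $d$. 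Both are routine once the overall scheme is set.
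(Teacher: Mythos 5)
The paper does not actually prove this theorem itself; it cites it as implicit in Razborov and Sherstov~\cite{RS10}. Your reconstruction is correct and follows the same route that underlies their Theorem~1.1: put a rank-$r$ sign-representative of $M$ in Forster's isotropic form with unit rows and unit columns (so $\|B\|_F^2 = K/r$ and $|B_{x,(S,w)}|\le 1$), test it against the $(N,n,\phi)$-pattern matrix $\Psi$ rather than against $M$, lower-bound $\langle B,\Psi\rangle$ by $\gamma K(1/r-\Delta)$ using the smoothness condition and the fact that each pattern string $z$ appears in exactly $K/2^n$ entries, and upper-bound it by $\|B\|_{\mathrm{tr}}\|\Psi\|\le K\cdot 2^{-n}(n/N)^{d/2}$ via the pattern-matrix spectral-norm formula together with $\widehat\phi(S)=0$ for $|S|\le d$ and $|\widehat\phi(S)|\le 2^{-n}\|\phi\|_1$; rearranging (and disposing of the trivial case $r\ge 1/\Delta$) gives the stated bound.
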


We require the following well-known combinatorial identity.

\begin{claim}\label{claim:combinatorial-identity}
For every polynomial $p$ of degree less than $2n$, we have $\sum_{t = -n}^n (-1)^t \binom{2n}{n + t} p(t) = 0$.
\end{claim}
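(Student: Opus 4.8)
The plan is to reduce the claim to the classical fact that the $2n$-th finite difference of a polynomial of degree less than $2n$ vanishes identically. First I would reindex the sum by setting $k = n + t$, so that $t$ runs over $\{-n, \dots, n\}$ exactly when $k$ runs over $\{0, \dots, 2n\}$, and $(-1)^t = (-1)^{k-n} = (-1)^n (-1)^k$. Writing $q(k) := p(k - n)$, which is again a polynomial in $k$ of degree less than $2n$, and pulling the constant $(-1)^n$ out front, the identity to be proved becomes $\sum_{k=0}^{2n} (-1)^k \binom{2n}{k} q(k) = 0$ for every polynomial $q$ with $\deg q < 2n$.

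Next I would invoke (and, for completeness, prove) the standard identity $\sum_{k=0}^{m} (-1)^k \binom{m}{k} q(k) = (-1)^m (\Delta^m q)(0)$, where $\Delta$ is the forward-difference operator $(\Delta h)(x) = h(x+1) - h(x)$; this follows by induction on $m$ using the Pascal recurrence $\binom{m}{k} = \binom{m-1}{k-1} + \binom{m-1}{k}$ and telescoping the resulting two sums. Since $\Delta$ sends every polynomial of degree $d \geq 1$ to one of degree $d - 1$ and annihilates constants, iterating it $2n$ times kills every polynomial of degree at most $2n - 1$; in particular $(\Delta^{2n} q)(0) = 0$ when $\deg q < 2n$, which is exactly what is required. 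This closes the argument after undoing the reindexing.

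An alternative, perhaps slicker, route is via generating functions: applying the differential operator $q\!\left(x\frac{d}{dx}\right)$ (which acts on $x^k$ by multiplication by $q(k)$, extended linearly) to the binomial expansion $(1 - x)^{2n} = \sum_{k=0}^{2n} (-1)^k \binom{2n}{k} x^k$ and then setting $x = 1$ produces $\sum_{k=0}^{2n} (-1)^k \binom{2n}{k} q(k)$ on one side, while on the other side $q\!\left(x\frac{d}{dx}\right)$ is a differential operator of order at most $\deg q < 2n$, so applying it to $(1 - x)^{2n}$ leaves every term with a factor $(1-x)^{j}$, $j \geq 1$, which vanishes at $x = 1$. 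I do not expect any genuine obstacle here; the only points to watch are the sign bookkeeping in the reindexing and making sure the degree bound ``less than $2n$'' is aligned with the order $2n$ of the difference (resp.\ derivative) operator --- indeed the identity fails for degree exactly $2n$, e.g.\ $p(t) = \prod_{j=0}^{2n-1}(t+n-j)$ gives $(\Delta^{2n}q)(0) = (2n)! \neq 0$ --- so the whole proof is essentially a one-paragraph computation.
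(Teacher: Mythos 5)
The paper states this claim as a ``well-known combinatorial identity'' and does not supply a proof, so there is no internal argument to compare against. Your reduction to the finite-difference fact is correct: after reindexing $k = n + t$ the sum becomes $(-1)^n \sum_{k=0}^{2n} (-1)^k \binom{2n}{k} q(k) = (-1)^n (-1)^{2n} (\Delta^{2n} q)(0)$ with $q(k) = p(k-n)$, and $\Delta^{2n}$ annihilates polynomials of degree less than $2n$ since each application of $\Delta$ strictly decreases the degree. The generating-function alternative is also sound, and your remark that the bound is sharp (the identity fails at degree exactly $2n$) is a useful sanity check. This is a complete, standard justification of the claim the paper takes for granted.
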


Recall from Section \ref{s:techniques} that the rational $\epsilon$-approximate degree of $f$
is the least degree of two polynomials $p$ and $q$ such that $|f(x) - p(x)/q(x)| \leq \epsilon$ for all $x$ 
in the domain of $f$. 
 Sherstov \cite[Theorem 6.9]{Sherstov14} showed that a dual witness to the rational approximate degree of any function $f$
 can be converted to a threshold degree dual witness for $\OR_n \circ f$.  
 Implicit in his theorem is the fact that a \emph{smooth} dual witness to the rational approximate degree of $f$ can be converted to a \emph{smooth} dual witness for the threshold degree of $\OR_n \circ f$.  More precisely, the following result is established by the proof of~\cite[Theorem 6.9]{Sherstov14}.\footnote{
 In Theorem \ref{thm: srdegtosthrdeg}, the functions $\psi_1$ and 
 $\psi_0$ together form a dual witness for the fact that the rational $\delta$-approximate degree of $f$ is at least $d$, while
 $\Psi$ is a dual witness to the fact that $\deg_{\pm}(F) \geq d$. See Appendix \ref{app} for details.
 However, we will not exploit this interpretation of Theorem \ref{thm: srdegtosthrdeg} in our analysis.}

\begin{theorem}[Sherstov \cite{Sherstov14}]\label{thm: srdegtosthrdeg}
Let $f : \pmone^n \to \pmone$ be any function.  Let $F$ denote $\OR_t \circ f : \pmone^{nt} \to \pmone$, and $\delta > \eps > 0$ be any real numbers.

Suppose there exist functions $\psi_0, \psi_1 \colon \{-1, 1\}^n \to \R$ that are not identically 0 and satisfy the following properties:
\begin{align}
f(x) = 1 & \implies \psi_0(x) \geq \delta|\psi_1(x)|, \label{property1} \\
f(x) = -1 & \implies \psi_1(x) \geq \delta|\psi_0(x)|, \label{property2}\\
\deg(p) < d & \implies \langle \psi_0, p \rangle = 0 \text{ and } \langle \psi_1, p \rangle = 0. \label{property3}
\end{align}
Then there exist functions $A, B : \bra{-1, 1}^{nt} \to \R$ such that $\Psi = \frac{1}{\delta}A - \frac{1}{\eps}B$ satisfies the following properties.
\begin{align}
\deg(p) \leq \min\bra{\lfloor \eps^2 t \rfloor d, d} & \implies \langle \Psi, p \rangle = 0. \label{thm7conc1}\\
F(x) \cdot \Psi(x_1, \dots, x_t) & \geq (\delta - \eps)^{2t} \prod_{i = 1}^t \abs{\psi_0(x_i)} \text{ for all } x \in \pmone^{nt}. \label{thm7conc2}\\
|A(x_1, \dots, x_t)| & \leq \prod_{i = 1}^t|\psi_0(x_i)|  \text{ for all }  x = (x_1, \dots, x_t) \in \pmone^{nt}. \label{thm7conc3}\\
|B(x_1, \dots, x_t)| & \leq \prod_{i : f(x_i) = 0}\abs{\psi_0(x_i)} \cdot \prod_{i : f(x_i) = 1}\delta \psi_1(x_i) + \prod_{i = 1}^t \left(\abs{\psi_0(x_i)} - \delta\psi_1(x_i)\right) \notag\\
&  \text{for all }  x = (x_1, \dots, x_t) \in \pmone^{nt}. \label{thm7conc4}
\end{align}
\end{theorem}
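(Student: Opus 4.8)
The plan is to adapt the construction behind Sherstov's conversion result cited just above this theorem, keeping the resulting dual object in the explicit split form $\Psi=\tfrac1\delta A-\tfrac1\eps B$ so that the magnitude bounds \eqref{thm7conc3} and \eqref{thm7conc4} can be read off directly. Normalize $\OR_t$ so that $F=\OR_t\circ f$ outputs $-1$ exactly when $f(x_i)=-1$ for some block $i$. I would build $\Psi$ by tensoring a local gadget, assembled from $\psi_0$ and $\psi_1$, over the $t$ blocks. The \emph{leading term} $A$ is morally the $t$-fold tensor power of $\psi_0$ (up to sign adjustments), which already gives $\Psi$ the correct sign on the one input with $f(x_i)=1$ for every $i$ and, after the subtraction, on ``generic'' inputs; by construction $|A(x_1,\dots,x_t)|\le\prod_i|\psi_0(x_i)|$, which is \eqref{thm7conc3}. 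The \emph{correction term} $B$ is a signed combination of ``mixed'' tensor products that use $\delta\psi_1(x_i)$ in place of $|\psi_0(x_i)|$ on some blocks, together with a full $t$-fold product; it is designed to flip the sign of $\Psi$ on exactly those inputs with some $f(x_i)=-1$ where $A$ fails, and bounding each of its terms by the triangle inequality gives \eqref{thm7conc4}. The asymmetric weights $\tfrac1\delta,\tfrac1\eps$ with $\eps<\delta$ are what allow the correction to dominate on the ``true'' inputs while $A$ dominates on the one ``false'' input.

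For orthogonality \eqref{thm7conc1}, the elementary fact is that a tensor product $\bigotimes_{i=1}^t\psi_{b_i}$ of functions each annihilating all polynomials of degree $<d$ in its own $n$ variables annihilates every polynomial of total degree $<td$ on $\{-1,1\}^{nt}$, since such a polynomial has degree $<d$ in at least one block. Hence $A$ is orthogonal to degree $<td$; this is why the final bound takes the form $\min\{\lfloor\eps^2 t\rfloor d,\,d\}$ rather than anything smaller on the $A$ side. The term $B$ is the delicate part: the coefficients in its expansion into mixed tensor products are forced to be those of an auxiliary univariate polynomial (in the number of ``active'' blocks) chosen to cancel the wrong-sign low-degree part of $A$, and the degree of that univariate polynomial one can afford is only $\approx\eps^2 t$, since a higher-degree correction would overwhelm the leading term --- this is precisely the slack bought by taking $\eps<\delta$. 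Because each block on which a term of $B$ departs from the pure-$\psi_0$ shape costs a factor of degree $d$, orthogonality of $B$ survives only up to degree $\lfloor\eps^2 t\rfloor d$; combining the two estimates yields \eqref{thm7conc1}.

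For \eqref{thm7conc2}, fix $x=(x_1,\dots,x_t)$ and let $T=\{i:f(x_i)=-1\}$. If $T=\emptyset$ then $F(x)=1$, every $\psi_0(x_i)\ge\delta|\psi_1(x_i)|\ge0$ by \eqref{property1}, so $\tfrac1\delta A$ contributes $\tfrac1\delta\prod_i|\psi_0(x_i)|$ and one checks that subtracting $\tfrac1\eps B$ still leaves at least $(\delta-\eps)^{2t}\prod_i|\psi_0(x_i)|$ --- the exponent $2t$ rather than $t$ reflecting that each block contributes the slack $\delta-\eps$ through both $\psi_0$ and $\psi_1$. If $T\neq\emptyset$ then $F(x)=-1$ and the roles reverse: on the blocks in $T$ we have $\psi_1(x_i)\ge\delta|\psi_0(x_i)|\ge0$ by \eqref{property2}, which makes the mixed products inside $\tfrac1\eps B$ positive and large enough that $\Psi(x)\le-(\delta-\eps)^{2t}\prod_i|\psi_0(x_i)|$. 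Either way $F(x)\cdot\Psi(x)\ge(\delta-\eps)^{2t}\prod_i|\psi_0(x_i)|$. The magnitude bounds \eqref{thm7conc3} and \eqref{thm7conc4} drop out along the way, by bounding $|A|$ and $|B|$ term by term using $|\psi_0(x_i)|$ and $\delta\psi_1(x_i)$ as the per-block bounds.

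The main obstacle is the joint design of $B$ needed for \eqref{thm7conc1}: it must (i) cancel the low-degree part of $A$ exactly where $A$ has the wrong sign, (ii) not overcorrect, so that $\Psi$ retains a clean sign with the quantitative margin $(\delta-\eps)^{2t}$ of \eqref{thm7conc2}, and (iii) remain orthogonal to polynomials of degree up to $\lfloor\eps^2 t\rfloor d$. Balancing (i)--(iii) is the ``robustification of $\OR$'' core of the argument and the only place where the gap $\delta-\eps>0$ is genuinely spent; once $B$ is in hand, conclusions \eqref{thm7conc2}--\eqref{thm7conc4} are routine triangle-inequality bookkeeping on the tensor-product expansions of $A$ and $B$.
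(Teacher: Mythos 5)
The paper does not prove this theorem; it is quoted from Sherstov \cite{Sherstov14} (Theorem 6.9) with no proof reproduced --- the paper only observes that Sherstov's proof yields the explicit split $\Psi = \tfrac1\delta A - \tfrac1\eps B$ together with the pointwise bounds \eqref{thm7conc3}--\eqref{thm7conc4}. So there is no ``paper's own proof'' to compare against, and what you are really attempting is a reconstruction of Sherstov's argument.

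Taken on those terms, your sketch has the right shape: a tensor leading term $A$ built from $\psi_0$, a correction $B$ built from mixed tensor products whose coefficients come from a low-degree univariate polynomial, with the gap $\delta - \eps > 0$ and the degree budget $\lfloor\eps^2 t\rfloor$ spent in that correction step. But there is a genuine gap, which you yourself flag: $B$ is never constructed. The three constraints you list as the ``main obstacle'' --- cancel the low-degree wrong-sign part of $A$, do not overcorrect, stay orthogonal to degree $\lfloor\eps^2 t\rfloor d$ --- are the entire mathematical content of the theorem. Without fixing the coefficients of $B$, the specific two-term form of \eqref{thm7conc4} and the $(\delta-\eps)^{2t}$ margin in \eqref{thm7conc2} cannot even be set up, let alone verified; the choice of the ``orification'' polynomial and the Markov/Chebyshev-type estimate that forces the $\eps^2 t$ degree budget are not details that can be outsourced to ``routine triangle-inequality bookkeeping,'' they are the proof. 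What you have is a correct high-level plan together with an honest description of the hole in it, not a proof.
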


\section{A Smooth Dual Witness for Majority}

Our main technical contribution in this paper is captured in Theorem \ref{thm:smooth-dual} below. This theorem constructs a \emph{smooth} dual witness $R$ for the hardness of rationally approximating the sign function on $\{0, \pm 1, \dots, \pm n\}$ (cf. Appendix \ref{app} for details of this interpretation of Theorem \ref{thm:smooth-dual}). We defer the proof until Section \ref{s: pfmain}. 

\begin{theorem} \label{thm:smooth-dual}
Let $1 \le d \le \frac{1}{3}\log n$ and let $n$ be odd. There exists a function $R : \{0, \pm 1, \dots, \pm n\} \to \R$ such that
\begin{itemize}
\item $\displaystyle \sum_{t = -n}^{n} |R(t)| = 1.$
\hfill \refstepcounter{equation} \label{eqn:sgn-L1} \textup{(\theequation)}
\item For $\delta = \exp(-18/(n^{1/(6d)}))$ and every $t = 1, 2, \dots, n$,
\begin{equation} \label{eqn:sgn-approx-sym}
R(t) \ge \delta |R(-t)|.
\end{equation}
\item If $p : \{0, \pm 1, \dots, \pm n\} \to \R$ is any polynomial of degree less than $d - 2$, then
\begin{equation} \label{eqn:sgn-phd}
\langle R, p \rangle = 0.
\end{equation}
\item For every $t \in \{0, \pm 1, \pm 2, \dots, \pm \lfloor n^{2/3} \rfloor\}$ we have
\begin{equation} \label{eqn:sgn-smooth}
|R(t)| \ge \Omega \left(\frac{1}{n^{20}}\right).
\end{equation}
\end{itemize}
\end{theorem}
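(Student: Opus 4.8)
The plan is to build $R$ as a weighted average $R = \sum_{|c| \le \lfloor n^{2/3} \rfloor} \lambda_c R_c$ of ``atomic'' dual witnesses $R_c : \{0, \pm 1, \dots, \pm n\} \to \R$, one centered at each integer $c$ in the smoothness window, together with nonnegative weights $\lambda_c$ summing to $1$ (uniform weights, or weights gently tapered near the ends of the window, should suffice). Each $R_c$ will satisfy the first three bullets of the theorem with the \emph{same} parameters --- $\sum_t |R_c(t)| = 1$, the annihilation identity $\langle R_c, p\rangle = 0$ for every polynomial $p$ of degree $< d - 2$, and the sign-balance $R_c(t) \ge \delta |R_c(-t)|$ for $t = 1, \dots, n$ --- and will additionally be \emph{concentrated at $c$}: $|R_c(c)| \ge n^{-a}$ while $|R_c(t)| \le n^{-b}$ for every other window point $t$, where $a$ is a small absolute constant and $b$ is sufficiently larger than $a$.

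Granting this family, the first three bullets for $R$ are immediate. The annihilation \eqref{eqn:sgn-phd} is a linear constraint, so it passes to averages. The $\ell_1$ bound follows with ``$\le$'' from the triangle inequality, $\|R\|_1 \le \sum_c \lambda_c \|R_c\|_1 = 1$, and can be restored to an equality by dividing $R$ by $\|R\|_1 \le 1$, which only strengthens the other (lower-bound) conclusions, so \eqref{eqn:sgn-L1} holds after this harmless renormalization. The sign-balance \eqref{eqn:sgn-approx-sym} survives averaging by the reverse triangle inequality, since $R(t) = \sum_c \lambda_c R_c(t) \ge \delta \sum_c \lambda_c |R_c(-t)| \ge \delta \bigl| \sum_c \lambda_c R_c(-t) \bigr| = \delta |R(-t)|$. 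Finally, the smoothness \eqref{eqn:sgn-smooth} is exactly what the averaging buys: for a window point $t$, the reverse triangle inequality again gives $|R(t)| \ge \lambda_t |R_t(t)| - \sum_{c \ne t} \lambda_c |R_c(t)| \ge \lambda_t n^{-a} - n^{-b}$, and since $\lambda_t = \Theta(n^{-2/3})$ this is $\Omega(n^{-a - 2/3})$, which is $\Omega(n^{-20})$ provided $a + 2/3 \le 20$ and $b > a + 2/3$. Thus the whole theorem reduces to constructing the concentrated atomic witnesses $R_c$.

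That construction is the crux and the step I expect to be hardest. The raw material is Sherstov's dual witness \cite{Sherstov13} showing that $\sgn$ on $\{0, \pm 1, \dots, \pm n\}$ has rational approximate degree $\Omega(\log n) \ge d$; in this symmetric setting his pair $(\psi_0, \psi_1)$ may be taken with $\psi_1(t) = \psi_0(-t)$, hence recorded as a single function, whose annihilation of low-degree polynomials comes ultimately from the binomial identity of Claim~\ref{claim:combinatorial-identity}. The obstruction is that this witness is concentrated near the origin and near geometrically spaced scales, whereas we need a version concentrated at an \emph{arbitrary} center $c$, without lowering the annihilation degree below $d - 2$ and --- more delicately --- without breaking the sign-balance, which \emph{forces} $R_c \ge 0$ on the positive half of the domain. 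That forced positivity rules out the obvious localized gadget, a shifted finite difference of order $d - 2$, whose coefficients alternate in sign; it also rules out multiplying a fixed core witness by a nonnegative low-degree ``bump'', since a polynomial of the allowed degree cannot be simultaneously small on the $\Theta(n^{2/3})$ window points and large at $c$. The route I would take is to localize the annihilating structure itself: split $\{1, \dots, n\}$ into $\Theta(d)$ geometric scales of ratio $n^{1/d}$, install a localized binomial-type gadget around the scale containing $|c|$ so as to spike $R_c$ at $c$, spread a thin low-degree correction over the remaining scales so that the low-order moments still cancel, and arrange the whole construction to be nearly antisymmetric about the origin so that the sign-balance holds with the claimed $\delta$. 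Since an exactly antisymmetric $R_c$ would give $\delta = 1$ but cannot kill even the first moment $t^1$ for $d \ge 4$ (a nonnegative function on the positive half cannot have vanishing $t$-moment), there must be slack, and I expect its size --- accumulated over the $\Theta(d)$ scales, with the relevant windows of size roughly $n^{1/(6d)}$ inside a scale --- to be precisely what produces $\delta = \exp(-18/n^{1/(6d)})$. The heaviest lifting will be the quantitative bookkeeping: verifying the correction term is small enough to meet that $\delta$, and that the mass of $R_c$ at $c$ and at the other window points land on the correct sides of the $n^{-a}$ and $n^{-b}$ thresholds with room to spare.
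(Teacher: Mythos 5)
Your high-level plan --- express $R$ as a weighted combination of ``atomic'' dual witnesses, one per scale in the smoothness window, and let the averaging produce smoothness --- is exactly the paper's strategy. The paper sets $\Delta = \lfloor n^{1/(3d)}\rfloor$, builds for each $u \in \{1,\dots,\lfloor n^{2/3}\rfloor\}$ an atom $p_u$ supported on the geometric set $S_u = \{\pm u\Delta^i : 0 \le i \le d-1\}$ via a Lagrange-type polynomial $r_u$, and then takes $R(t) \propto (-1)^t\sum_u u^{20}\, p_u(t)/\|p_u\|_1$. The first three bullets pass to the combination for the reasons you give. So far so good.

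The gap is in the properties you ask of the atoms, which cannot be realized. You require each $R_c$ ($c$ ranging over \emph{all} integers in $\{-\lfloor n^{2/3}\rfloor,\dots,\lfloor n^{2/3}\rfloor\}$) to satisfy $|R_c(c)| \ge n^{-a}$ and $|R_c(t)| \le n^{-b}$ for every other window point $t$, \emph{and} the sign-balance $R_c(t) \ge \delta|R_c(-t)|$ for $t>0$ with the theorem's $\delta = \exp(-18/n^{1/(6d)})$, which for $d = \Theta(\log n)$ is $1 - o(1)$. For $c < 0$, the sign-balance at $t = -c > 0$ forces $R_c(-c) \ge \delta|R_c(c)| \ge \delta n^{-a}$, but $-c$ is itself a window point distinct from $c$, so you would also need $|R_c(-c)| \le n^{-b}$ with $b > a$ --- a contradiction once $\delta$ is close to $1$. (You gesture at making each atom ``nearly antisymmetric,'' but that is exactly what makes $|R_c(-c)| \approx |R_c(c)|$; you cannot simultaneously have near-antisymmetry and concentration at the single point $c$.) The paper sidesteps this by indexing atoms only by positive $u$ and accepting that each $p_u$ carries comparable mass at $+u$ and $-u$; this is precisely Claim~\ref{claim:approx-symmetry}, which is where $\delta = \exp(-18/\sqrt{\Delta})$ comes from.

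A second, quantitatively sharper problem: even for $c > 0$, the ``$\le n^{-b}$ on other window points'' claim fails. The paper's $p_u$ has support on $\{\pm u\Delta^i\}$, and Claim~\ref{claim:tails-small} shows $|p_u(-u\Delta^j)|$ can be as \emph{large} as roughly $e^4\Delta^2\,|p_u(-u)|$ for $j \le 3$; since $\Delta = \lfloor n^{1/(3d)}\rfloor$ is a (possibly large) constant once $d = \Theta(\log n)$, these nearby geometric points carry mass \emph{comparable to, or larger than}, the center. Your atoms are thus far less concentrated than your bookkeeping assumes, and your uniform weights $\lambda_c = \Theta(n^{-2/3})$ do not compensate. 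The paper's crucial device for the $v<0$ case is the polynomial weight $u^{20}$: the atoms $p_{|v|\Delta^{-j}}$ that could cancel $p_{|v|}$ at $v$ have weight smaller by $\Delta^{-20j}$, and this geometric decay, combined with Claims~\ref{claim:middle-big}--\ref{claim:tails-small}, is what makes the lower bound $|R(v)| \ge \Omega(n^{-20})$ go through. Without this non-uniform weighting, and with the unrealizable concentration hypothesis, your argument for \eqref{eqn:sgn-smooth} does not close.
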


The following theorem shows how to convert the (univariate) function $R$ from Theorem \ref{thm:smooth-dual} into a dual witness for the (multivariate) $\MAJ$ function.

\begin{theorem}\label{thm: univtomultivphd}
Let $1 \le d \le \frac{1}{3}\log n$ and let $n$ be odd.
Let $R : \bra{0, \pm1, \dots, \pm n} \rightarrow \mathbb{R}$ be any function obtained in Theorem~\ref{thm:smooth-dual}.
Then, the multivariate polynomial $R' : \pmone^{2n} \rightarrow \mathbb{R}$ defined by $R'(x) = R(n - |x|)/\binom{2n}{|x|}$ 
satisfies the following properties.

\begin{itemize}
    \item $\displaystyle     \|R'\|_1 = 1.$
    \hfill \refstepcounter{equation} \label{eqn:maj-L1} \textup{(\theequation)}
    \item For $\delta = \exp(-18/(n^{1/(6d)}))$ and every $t = 1, 2, \dots, n$,
    \begin{equation} \label{eqn:maj-approx-sym}
    R'(x) \ge \delta |R'(y)|
    \end{equation}
    for any $x, y \in \pmone^{2n}$ such that $|x| = n - t, |y| = n + t$.
    \item For any polynomial $p$ of degree at most $d - 2$,
    \begin{equation}\label{eqn:maj-phd}
    \langle R', p \rangle = 0.
    \end{equation}
    \item For all $x \in \pmone^{2n}$ such that $n - \lfloor n^{2/3} \rfloor \leq |x| \leq n + \lfloor n^{2/3} \rfloor$,
    \begin{equation}\label{eqn:maj-smooth}
    |R'(x)| \geq \Omega\left(\frac{1}{n^{20} \cdot 2^{2n}}\right).
    \end{equation}
\end{itemize}
\end{theorem}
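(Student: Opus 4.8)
The plan is to verify each of the four bulleted properties of $R'$ by reducing it to the corresponding property of $R$, using the single structural fact that $R'(x)$ depends on $x$ only through the Hamming weight $|x|$ and that the normalizing factor $1/\binom{2n}{|x|}$ exactly cancels the number of strings of each weight. Three of the four properties then become one-line computations. For \eqref{eqn:maj-L1}, partitioning $\pmone^{2n}$ by weight gives $\|R'\|_1 = \sum_{k=0}^{2n}\binom{2n}{k}\cdot |R(n-k)|/\binom{2n}{k} = \sum_{k=0}^{2n}|R(n-k)| = \sum_{t=-n}^{n}|R(t)| = 1$ by \eqref{eqn:sgn-L1}. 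For \eqref{eqn:maj-approx-sym}, if $|x| = n-t$ and $|y| = n+t$ then $R'(x) = R(t)/\binom{2n}{n-t}$ and $R'(y) = R(-t)/\binom{2n}{n+t}$, and since $\binom{2n}{n-t} = \binom{2n}{n+t}$ the inequality $R'(x)\ge \delta|R'(y)|$ is exactly \eqref{eqn:sgn-approx-sym} divided through by $\binom{2n}{n+t}$. For \eqref{eqn:maj-smooth}, if $n - \lfloor n^{2/3}\rfloor \le |x| \le n + \lfloor n^{2/3}\rfloor$ then $n - |x| \in \{0, \pm 1, \dots, \pm\lfloor n^{2/3}\rfloor\}$, so $|R(n-|x|)| \ge \Omega(1/n^{20})$ by \eqref{eqn:sgn-smooth}, while $\binom{2n}{|x|}\le \binom{2n}{n}\le 2^{2n}$; dividing gives $|R'(x)|\ge \Omega(1/(n^{20}2^{2n}))$.

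The only property requiring an actual argument is the orthogonality \eqref{eqn:maj-phd}, which I would obtain by symmetrization. Because $R'$ is invariant under permuting coordinates, $\langle R', p\rangle = \langle R', p^{\mathrm{sym}}\rangle$ for any $p$, where $p^{\mathrm{sym}}(x) = \E_{\sigma}[p(x_{\sigma(1)},\dots,x_{\sigma(2n)})]$ has degree at most $\deg p \le d-2$; by Minsky--Papert symmetrization $p^{\mathrm{sym}}(x) = \bar q(|x|)$ for a univariate polynomial $\bar q$ of degree at most $d-2$. Hence
\[
\langle R', p\rangle = \sum_{k=0}^{2n}\binom{2n}{k}\cdot\frac{R(n-k)}{\binom{2n}{k}}\,\bar q(k) = \sum_{k=0}^{2n}R(n-k)\,\bar q(k) = \sum_{t=-n}^{n}R(t)\,\bar q(n-t) = \langle R, q\rangle,
\]
where $q(t) := \bar q(n-t)$ has the same degree as $\bar q$, so $\langle R, q\rangle = 0$ by the univariate orthogonality property \eqref{eqn:sgn-phd}. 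Equivalently, working monomial-by-monomial, one can use the elementary identity $\sum_{x:|x|=k}\prod_{i\in S}x_i = \binom{2n}{k}\,g_{|S|}(k)$, where $g_j$ is the degree-$j$ univariate polynomial $g_j(k) = \sum_{\ell=0}^j\binom{j}{\ell}(-2)^\ell\,\frac{k(k-1)\cdots(k-\ell+1)}{2n(2n-1)\cdots(2n-\ell+1)}$ (i.e.\ $g_j(k) = \E_{x:|x|=k}[\prod_{i\in S}x_i]$), which reduces $\langle R',\prod_{i\in S}x_i\rangle$ for $|S|\le d-2$ to $\langle R, g_{|S|}(n-\cdot)\rangle = 0$ in exactly the same fashion.

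The one point to handle with care is the degree bookkeeping in this last step: symmetrizing a degree-$(d-2)$ multilinear polynomial produces a univariate polynomial of degree up to $d-2$, so one must check that this lands within the range of degrees annihilated by \eqref{eqn:sgn-phd} (this is precisely why the statement is insensitive to additive $O(1)$ shifts in $d$, which are in any case immaterial since ultimately $d = \Theta(\log n)$). Apart from that accounting there is no genuine obstacle: the remaining three properties reduce to grouping $\pmone^{2n}$ by Hamming weight and cancelling each count $\binom{2n}{k}$ against the normalizer $1/\binom{2n}{k}$ built into the definition of $R'$.
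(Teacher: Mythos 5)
Your proposal is correct and follows essentially the same route as the paper: cancel each $\binom{2n}{k}$ against the $1/\binom{2n}{|x|}$ normalizer for the $\ell_1$-norm, symmetry, and smoothness properties, and use Minsky--Papert symmetrization to reduce the orthogonality claim to the univariate one. The degree-bookkeeping point you flag is real but is an artifact of the paper's own phrasing -- Equation~\eqref{eqn:sgn-phd} is stated for degree \emph{less than} $d-2$, whereas its proof (via Claim~\ref{claim:combinatorial-identity} and $\deg r_u \le 2n+1-d$) actually yields orthogonality for degree \emph{at most} $d-2$, which is exactly what Equation~\eqref{eqn:maj-phd} needs -- so your symmetrization step goes through as you describe.
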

\begin{proof}
To establish Equation \eqref{eqn:maj-L1}, observe:
\begin{align*}
\|R'\|_1 & = \sum_{x \in \pmone^{2n}}|R'(x)| = \sum_{t = 0}^{2n}\left(\sum_{x \in \pmone^{2n} : |x| = t}|R'(x)|\right)\\
& = \sum_{t = 0}^{2n} \binom{2n}{t} |R(n - t)|/\binom{2n}{t} = \sum_{t = -n}^{n}|R(t)| = 1,
\end{align*}
where the last equality follows from Equation~\eqref{eqn:sgn-L1}.
Equation~\eqref{eqn:maj-approx-sym} follows directly from Equation~\eqref{eqn:sgn-approx-sym} and the definition of $R'$.

To establish Equation \eqref{eqn:maj-phd}, consider  any polynomial $p : \pmone^{2n} \rightarrow \R$ of degree at most $d - 2$.
For any permutation $\sigma \in S_{2n}$, define the polynomial $p_{\sigma}$ by $p_{\sigma}(x_1, \dots, x_{2n}) = p(x_{\sigma(1)}, \dots, x_{\sigma(2n)})$.  Note that, since $R'$ is symmetric, $\langle R', p_{\sigma}\rangle = \langle R', p \rangle$ for all $\sigma \in S_{2n}$.  Define $q = \mathbb{E}_{\sigma \in S_{2n}}[p_{\sigma}]$. Note that $q$ is symmetric and $\langle R', p \rangle = \langle R', q \rangle$.  It is a well-known fact~(cf.~\cite{MP69}) that $q$ can be written as a polynomial $q'$ of degree at most $d - 2$ in the variable $\sum_{i = 1}^{2n}x_i$, and so can $R'$.
Hence,
$
\langle R', p \rangle = \langle R', q \rangle = \sum_{t = 0}^{2n} \binom{2n}{t} \frac{R(n - t)}{\binom{2n}{t}} \cdot q'(t) = 0,
$
where the final equality holds by Equation~\eqref{eqn:sgn-phd}.

To establish Equation \eqref{eqn:maj-smooth}, observe that by Equation~\eqref{eqn:sgn-smooth} and the definition of $R'$, we have that for all $x \in \pmone^{2n}$ such that $|x| \in \left[n - \lfloor n^{2/3} \rfloor, n + \lfloor n^{2/3} \rfloor\right]$,
$
\abs{R'(x)} \geq \Omega\left(\frac{1}{n^{20} \cdot \binom{2n}{|x|}}\right) \geq \Omega\left(\frac{1}{n^{20} \cdot 2^{2n}}\right).
$
\end{proof}

We are ready to derive a lower bound on the sign-rank of the $(4n^2, 4n, \OR_2 \circ \MAJ_{2n})$-pattern matrix.

\begin{theorem}\label{thm: main}
The $(4n^2, 4n, \OR_2 \circ \MAJ_{2n})$-pattern matrix $M$ satisfies
$
\sr(M) \geq n^{\Omega(\log n)}.
$
\end{theorem}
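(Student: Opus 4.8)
The plan is to feed the multivariate smooth dual witness of Theorem~\ref{thm: univtomultivphd} through Sherstov's $\OR$-composition theorem (Theorem~\ref{thm: srdegtosthrdeg}) and then through the Razborov--Sherstov sign-rank criterion (Theorem~\ref{thm: sthrdegtosr}). Assume $n$ is odd and set $d := \lfloor (\log n)/35 \rfloor$, so that $1 \le d \le \tfrac13 \log n$ for $n$ large; since $n^{1/(6d)} = 2^{(\log n)/(6d)} \ge 2^{35/6} > 52$, the quantity $\delta := \exp(-18/n^{1/(6d)})$ exceeds $1/\sqrt2$, and we may fix a real $\eps$ with $1/\sqrt2 \le \eps < \delta$, so that $\lfloor 2\eps^2 \rfloor = 1$ --- this is exactly what lets us take the composition arity to be $t = 2$. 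Let $R' : \pmone^{2n} \to \R$ be the function produced by Theorems~\ref{thm:smooth-dual} and~\ref{thm: univtomultivphd} for this $d$, satisfying \eqref{eqn:maj-L1}--\eqref{eqn:maj-smooth}.

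First I would turn $R'$ into a pair $(\psi_0,\psi_1)$ meeting the hypotheses of Theorem~\ref{thm: srdegtosthrdeg} for $f = \MAJ_{2n}$ and $t = 2$. Let $c : \pmone^{2n} \to \pmone^{2n}$ negate every coordinate, and set $\psi_0 := R'$ and $\psi_1 := R' \circ c$. Since negating variables preserves degree, \eqref{eqn:maj-phd} gives $\langle \psi_1, p \rangle = \langle R', p \circ c \rangle = 0$ and $\langle \psi_0, p\rangle = 0$ for all $p$ of degree at most $d-2$, which is property~\eqref{property3} with the theorem's degree parameter set to $d-1$. For the sign conditions: if $\MAJ_{2n}(x) = 1$ then $|x| = n - t$ for some $t \in \{1,\dots,n\}$, so $|c(x)| = n + t$ and \eqref{eqn:maj-approx-sym} gives $\psi_0(x) = R'(x) \ge \delta|R'(c(x))| = \delta|\psi_1(x)|$, establishing \eqref{property1}; the case $\MAJ_{2n}(x) = -1$ with $|x| \ge n+1$ is symmetric and yields \eqref{property2}, while on the central slice $|x| = n$ (where $\MAJ_{2n}(x) = -1$ and $\psi_0(x) = \psi_1(x) = R(0)/\binom{2n}{n}$) property~\eqref{property2} reduces to $R(0) \ge 0$, which holds by the construction behind Theorem~\ref{thm:smooth-dual} (indeed it is forced, since $(\psi_0,\psi_1)$ is a bona fide rational-approximation witness). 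Neither $\psi_0$ nor $\psi_1$ is identically zero, as $\|R'\|_1 = 1$.

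Next I would apply Theorem~\ref{thm: srdegtosthrdeg} to get $A, B : \pmone^{4n} \to \R$ and $\Psi = \tfrac1\delta A - \tfrac1\eps B$. By \eqref{thm7conc3}, $\|A\|_1 \le \|\psi_0\|_1^2 = \|R'\|_1^2 = 1$; expanding \eqref{thm7conc4} and using $\|\psi_0\|_1 = \|\psi_1\|_1 = 1$ and $\delta < 1$ bounds each summand by a product of norm-$1$ functions, so $\|B\|_1 = O(1)$ and hence $\|\Psi\|_1 = O(1)$, since $\delta, \eps = \Theta(1)$. Put $\phi := \Psi/\|\Psi\|_1$, so $\|\phi\|_1 = 1$. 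Then \eqref{thm7conc1} makes $\phi$ orthogonal to every polynomial of degree at most $\min\{\lfloor 2\eps^2\rfloor(d-1),\, d-1\} = d-1$; \eqref{thm7conc2} gives $(\OR_2 \circ \MAJ_{2n})(x) \cdot \phi(x) \ge 0$ for all $x$, and also $|\phi(x_1,x_2)| \ge \Omega\big((\delta-\eps)^4\, |R'(x_1)|\, |R'(x_2)|\big)$ since $\psi_0 = R'$. Feeding in the smoothness bound \eqref{eqn:maj-smooth} together with the Chernoff estimate that only an $\exp(-\Omega(n^{1/3}))$ fraction of $z \in \pmone^{2n}$ has Hamming weight outside $[n - \lfloor n^{2/3}\rfloor,\, n + \lfloor n^{2/3}\rfloor]$, we obtain $|\phi(x)| \ge \gamma$ for all but a $\Delta$ fraction of $x \in \pmone^{4n}$, with $\gamma = \Omega\big(1/(n^{40} 2^{4n})\big)$ and $\Delta = \exp(-\Omega(n^{1/3}))$.

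Finally, apply Theorem~\ref{thm: sthrdegtosr} to $\phi$ with Boolean function $\OR_2 \circ \MAJ_{2n}$ on $4n$ variables, $N = 4n^2$ (so $n/N = 1/n$), and degree parameter $d-1$, giving
\[
\sr(M) \ \ge\ \frac{\gamma}{2^{-4n}\, n^{-(d-1)/2} + \gamma\Delta}.
\]
Because $n^{-(d-1)/2} = \exp(-\Theta(\log^2 n))$ is vastly larger than $\Delta = \exp(-\Omega(n^{1/3}))$, the summand $\gamma\Delta$ is negligible beside $2^{-4n} n^{-(d-1)/2}$, so the right-hand side is $\Omega\big(\gamma \cdot 2^{4n}\, n^{(d-1)/2}\big) = \Omega\big(n^{(d-1)/2 - 40}\big) = n^{\Omega(\log n)}$, as $d = \Theta(\log n)$. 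The step I expect to be the main obstacle is synthesizing the pair $(\psi_0,\psi_1)$ from the single symmetric witness $R'$ so that it exactly matches the interface of Theorem~\ref{thm: srdegtosthrdeg} --- the complementation trick, the sign bookkeeping on the central Hamming slice, and the verification that $d$ is small enough to keep $\delta > \eps \ge 1/\sqrt2$ --- together with checking that the smoothness promised by \eqref{eqn:maj-smooth} survives Sherstov's transformation with only the $\poly(n)$ loss tracked by \eqref{thm7conc2}--\eqref{thm7conc4}; the final parameter arithmetic is routine.
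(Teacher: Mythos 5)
Your proposal is correct and follows essentially the same route as the paper: build $(\psi_0,\psi_1)$ from the smooth multivariate dual witness of Theorem~\ref{thm: univtomultivphd}, run it through Sherstov's $\OR$-composition (Theorem~\ref{thm: srdegtosthrdeg}) with $t=2$ and $\eps \in [1/\sqrt{2}, \delta)$, and then apply the Razborov--Sherstov criterion (Theorem~\ref{thm: sthrdegtosr}); the choice $d = \Theta(\log n)$ and the final parameter arithmetic match in spirit. One small divergence worth noting: you set $\psi_0 = R'$ and $\psi_1 = R' \circ c$, which is the opposite assignment from the paper's (the paper takes $\psi_1(x) = R(n-|x|)/\binom{2n}{|x|}$ and $\psi_0(x) = R(|x|-n)/\binom{2n}{|x|}$); under the paper's stated convention $\MAJ(x)=-1 \iff |x|\ge n$, your assignment is the one that actually matches \eqref{eqn:maj-approx-sym} to Properties~\eqref{property1}--\eqref{property2}, and your explicit bookkeeping on the central slice $|x|=n$ is a careful touch that the paper elides.
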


\begin{proof}
Let $F$ denote the function $\OR_2 \circ \MAJ_{2n}$ in this proof.
Set $d = \log n / 100$ and consider the function $R : \bra{0, \pm1, \dots, \pm n} \to \mathbb{R}$ obtained via Theorem~\ref{thm:smooth-dual}. Define the function $R' : \bra{0, \pm1, \dots, \pm n} \to \mathbb{R}$ by $R'(t) = R(-t)$.
Define the functions $\psi_0, \psi_1 : \pmone^{2n} \to \mathbb{R}$ by $\psi_1(x) = R(n - |x|)/\binom{2n}{|x|}$, and $\psi_0(x) = R'(n - |x|)/\binom{2n}{|x|}$ .
We now verify that $\psi_0, \psi_1$ satisfy the conditions in Theorem~\ref{thm: srdegtosthrdeg} for $\delta = \exp(-18/(n^{1/(6d)})) = \exp(-18/n^{100/6 \log n}) = \exp(-18/2^{100/6}) > 0.99$.  Set $\eps = \delta \cdot c$, where $c > 0$ is a constant such that $0.99 > \delta \cdot c > 1/{\sqrt{2}}$.
\begin{itemize}
\item By the definitions of $\psi_0, \psi_1$ and Equation~\eqref{eqn:maj-approx-sym}, Properties \eqref{property1} and \eqref{property2} in the statement of Theorem~\ref{thm: srdegtosthrdeg} are satisfied.
\item Equation~\eqref{eqn:maj-phd} implies that $\langle \psi_0, p \rangle = \langle \psi_1, p \rangle = 0$ for any polynomial $p$ of degree at most $d - 2$,
and hence Property \eqref{property3} is satisfied. 
\end{itemize}
Moreover, Equation~\eqref{eqn:maj-smooth} implies that $|\psi_0(x)|, |\psi_1(x)| \geq \Omega\left( \frac{1}{n^{20} \cdot 2^{2n}}\right)$ for all $x \in \pmone^{2n}$ such that $n - \lfloor n^{2/3} \rfloor \leq |x| \leq n + \lfloor n^{2/3} \rfloor$, and Equation~\eqref{eqn:maj-L1} implies $\|\psi_0\|_1 = \|\psi_1\|_1 = 1$.
Theorem~\ref{thm: srdegtosthrdeg} now implies the existence of a function $\Psi$ satisfying the following properties.
\begin{itemize}
\item By Equation \eqref{thm7conc1}, $\deg(p) < \min\bra{\lfloor 2\eps^2 \rfloor \cdot ((\log n)/100 - 2), (\log n)/100 - 2} \implies \langle \Psi, p \rangle = 0$.
Since $\eps > 1/\sqrt{2}$, this implies that \[
\deg(p) < (\log n)/100 - 2 \implies \langle \Psi, p \rangle = 0.
\]
    \item By Equation \eqref{thm7conc2}, $\Psi(x) \cdot F(x) \geq 0$ for all $x \in \pmone^{2n} \times \pmone^{2n}$.
    \item We now note that the functions $A$ and $B$ obtained in Theorem~\ref{thm: srdegtosthrdeg} have $\ell_1$-norm at most a constant.  Since  $\|\psi_0\|_1 = \|\psi\|_1 = 1$, we use Equation~\eqref{thm7conc3} to conclude that
    \[
    \sum_{x_1, x_2 \in \pmone^{2n} \times \pmone^{2n}}|A(x_1, x_2)| \leq \sum_{x_1 \in \pmone^{2n}}|\psi_0(x_1)| \cdot \sum_{x_2 \in \pmone^{2n}}|\psi_0(x_2)| = 1.
    \]
    By Equation~\eqref{thm7conc4}, we have
    \begin{align*}
    \sum_{x_1, x_2 \in \pmone^{2n}}|B(x_1, x_2)| \leq \max\bra{\|\psi_0\|_1, \delta\|\psi_1\|_1}^2 + \|\psi_0\|_1^2 + \delta \|\psi_0\|_1 \|\psi_1\|_1 + \delta^2 \|\psi_1\|_1^2,
    \end{align*}
    which is at most a constant, since $\delta = O(1)$.
    
    Combined with the fact that $\eps$ is a constant, we conclude $\|\Psi\|_1 \leq \frac{1}{\delta}\|A\|_1 + \frac{1}{\eps}\|B\|_1 \leq O(1)$.
    \item By Equation \eqref{thm7conc2}, $F(x) \cdot \Psi(x_1, x_2) \geq (\delta - \eps)^{4} \abs{\psi_0(x_1)} \cdot \abs{\psi_0(x_2)}~\forall x \in \pmone^{4n}$.  This implies that for $|x_1|, |x_2| \in [n - \lfloor n^{2/3} \rfloor, n + \lfloor n^{2/3} \rfloor]$,
    \[
    |\Psi(x_1, x_2)| \geq \Omega\left(\frac{1}{n^{40} \cdot 2^{4n}}\right),
    \]
    since $\delta - \eps = \Omega(1)$
    \item By a standard Chernoff bound, the number of inputs in $\pmone^{2n} \times \pmone^{2n}$ such that $|x_1|, |x_2| \in [n - \lfloor n^{2/3} \rfloor, n + \lfloor n^{2/3} \rfloor]$ is at least $1 - 2\exp(-n^{1/3}/3)$. 
\end{itemize}

Plugging $f = \OR_2 \circ \MAJ_{2n}$ and $\phi = \frac{\Psi}{\|\Psi\|_1}$ into Theorem~\ref{thm: sthrdegtosr}, we conclude that the sign-rank of the $(4n^2, 4n, \OR_2 \circ \MAJ_{2n})$ pattern matrix $M$ is bounded below as
\[
\sr(M) \geq \Omega\left(\frac{\frac{1}{n^{40}} \cdot \frac{1}{2^{4n}}}{\left(\frac{1}{n^{(\log n / 200) - 1}} \cdot \frac{1}{2^{4n}}\right) + \left(\frac{1}{n^{40}} \cdot \frac{1}{2^{4n}} \cdot 2\exp(-n^{1/3}/3)\right)}\right) \geq n^{\Omega(\log n)}.
\]
\end{proof}

We are now ready to prove Theorem~\ref{intromain}.

\begin{proof}[Proof of Theorem~\ref{intromain}]
Note that the function $\AND \circ \MAJ(x) = \overline{\OR \circ \MAJ(\overline{x})}$.  Consider the dual witness $\phi = \frac{\Psi}{\|\Psi\|_1}$ obtained for the threshold degree of $\OR_2 \circ \MAJ_{2n}$ in the previous proof.  Note that the function $\phi'$ defined by $\phi'(x) = -\phi(\overline{x})$ acts as a dual witness for the threshold degree of $\AND_2 \circ \MAJ_{2n}$, and satisfies all the conditions in Theorem~\ref{thm: sthrdegtosr} with the same parameters as in the proof of Theorem~\ref{thm: main}.  Proceeding in exactly the same way as in the previous proof, we conclude that sign-rank of the 
$(4n^2, 4n, \AND_2 \circ \MAJ_{2n})$ pattern matrix $M'$ is bounded below as
\begin{equation}\label{eqn: srmain}
\sr(M') \geq n^{\Omega(\log n)}.
\end{equation}

Denote by $f$ the communication game corresponding to the $(2n^2, 2n, \MAJ_{2n})$ pattern matrix. 
For completeness, we now sketch a standard $\UPPcc$ protocol of cost $O(\log n)$ for $f$.  Note that Alice holds $2n^2$ input bits, and Bob holds a $(2n \cdot \log n)$-bit string indicating the ``relevant bits'' in each block of Alice's input and a $2n$-bit string $w$.
Bob sends Alice the index of a uniformly random relevant bit using $\log (2n^2)$ bits of communication. Alice responds with her value $b$ of that input bit, and Bob outputs $b \oplus w_i$.  It is easy to check that this is a valid $\UPPcc$ protocol, and it has cost $O(\log n)$.

One can verify by the definition of pattern matrices (Definition~\ref{defn: pm}) that the communication game corresponding to the $(4n^2, 4n, \AND_2 \circ \MAJ_{2n})$ pattern matrix $M'$ equals $f \wedge f$.  By Theorem~\ref{thm: ps} and Equation~\eqref{eqn: srmain}, we obtain that 
\[
\UPP(f \wedge f) = \Theta(\log \sr(M')) = \Omega(\log^2 n).
\]

As mentioned in Section~\ref{s: intro}, the result of Klivans et al.~\cite{KOS04} implies that $\sr(M') = O(\log^2 n)$.
Thus, the function $f$ satisfies $\UPPcc(f) = O(\log n)$, but $\UPPcc(f \wedge f) = \Theta(\log^2 n)$.
\end{proof}

Corollary~\ref{cor: learn} follows immediately from the previous proof and the definition of pattern matrices.

\section{Proof of Theorem~\ref{thm:smooth-dual}}
\label{s: pfmain}
The rest of this paper is dedicated towards proving Theorem~\ref{thm:smooth-dual}.
Before proving the theorem, we describe the main auxiliary construction and prove some preliminary facts about it.

Let $\Delta = \lfloor n^{1/(3d)} \rfloor \ge 2$. Fix any $u \in \{1, \dots, \lfloor n^{2/3} \rfloor  - 1, \lfloor n^{2/3} \rfloor \}$. Define the set 
$$S_u = \{\pm u, \pm u\Delta, \pm u\Delta^2, \dots, \pm u\Delta^{d-1}\}.$$ Define the polynomial $r_u : \bra{0, \pm 1, \dots, \pm n} \to \R$ by

\[r_u(t) = \frac{1}{(2n)!} \prod_{i = 0}^{d-1} \left(t - \left(u\Delta^i \sqrt{\Delta}\right)\right) \prod_{s \notin S_u} (t - s).\]

Since $n$ is odd, notice that $\sgn(r_u(t)) = (-1)^t$, for $t \in \{u, u\Delta, u\Delta^2, \dots, u\Delta^{d-1}\}$, and $r_u(t) = 0$ for $t \notin S_u$.

Define
\[p_u(t) = \binom{2n}{n+t} r_u(t) = \begin{cases}
(-1)^{n - t} \cdot \dfrac{\prod\limits_{i = 0}^{d-1} \left(t - \left(u\Delta^i \sqrt{\Delta}\right)\right)}{\prod\limits_{\substack{s \in S_u \\ s \ne t}} (t - s)} & \text{ if } t \in S_u \\
& \\
0 & \text{ otherwise.}
\end{cases}\]

The following claim tells us that for any $u \in \bra{1, \dots, \lfloor n^{2/3} \rfloor}$, the function $p_u$ places a reasonably large mass on  input $-u$.
\begin{claim} \label{claim:middle-big}
\[|p_u(-u)| \ge \frac{\sqrt{\Delta} + 1}{2} \cdot u^{-(d-1)} \cdot\Delta^{-(d-1)^2/2}.\]
\end{claim}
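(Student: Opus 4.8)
The plan is to compute $|p_u(-u)|$ in closed form using the displayed formula for $p_u$ on $S_u$ — which applies since $-u \in S_u$ — and then read off the bound by an elementary estimate. Substituting $t = -u$ into that formula, the numerator $\prod_{i=0}^{d-1}\bigl(t - u\Delta^i\sqrt{\Delta}\bigr)$ becomes $(-u)^d \prod_{i=0}^{d-1}\bigl(1 + \Delta^{i+1/2}\bigr)$, and the denominator $\prod_{s \in S_u,\, s \ne -u}(-u - s)$ splits into the contribution of $s = u\Delta^j$ for $j = 0, \dots, d-1$, each factor being $-u(1 + \Delta^j)$, and of $s = -u\Delta^j$ for $j = 1, \dots, d-1$, each factor being $u(\Delta^j - 1)$. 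Taking absolute values, the $u^d$ from the numerator cancels against the $u^d$ coming from the first group of $d$ denominator factors, leaving
\[
|p_u(-u)| = \frac{1}{u^{d-1}} \cdot \frac{\prod_{i=0}^{d-1}(1 + \Delta^{i+1/2})}{\prod_{j=0}^{d-1}(1 + \Delta^j) \cdot \prod_{j=1}^{d-1}(\Delta^j - 1)}.
\]

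Next I would peel off the $i = 0$ factor $1 + \sqrt{\Delta}$ of the numerator and the $j = 0$ factor $1 + \Delta^0 = 2$ of the denominator; their ratio is exactly the prefactor $\frac{\sqrt{\Delta}+1}{2}$ appearing in the claim. It then suffices to show
\[
\frac{\prod_{i=1}^{d-1}(1 + \Delta^{i+1/2})}{\prod_{j=1}^{d-1}(1 + \Delta^j) \cdot \prod_{j=1}^{d-1}(\Delta^j - 1)} \ge \Delta^{-(d-1)^2/2}.
\]
For the denominator I would pair the two products factor-by-factor, using $(1 + \Delta^j)(\Delta^j - 1) = \Delta^{2j} - 1 \le \Delta^{2j}$ (positivity of $\Delta^j - 1$ uses $\Delta \ge 2$), so the denominator is at most $\prod_{j=1}^{d-1}\Delta^{2j} = \Delta^{(d-1)d}$. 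For the numerator I would use the trivial bound $1 + \Delta^{i+1/2} \ge \Delta^{i+1/2}$ to get $\prod_{i=1}^{d-1}(1 + \Delta^{i+1/2}) \ge \Delta^{\sum_{i=1}^{d-1}(i + 1/2)} = \Delta^{(d-1)(d+1)/2}$. The resulting exponent is $\frac{(d-1)(d+1)}{2} - (d-1)d = -\frac{(d-1)^2}{2}$, which is exactly what is needed.

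I do not expect a genuine obstacle here: the argument is a single elementary computation, and the only points requiring care are the signs (which disappear once we pass to absolute values), the exact index ranges of the three products after deleting $s = -u$ from $S_u$, and the degenerate case $d = 1$, where the products over $i, j \ge 1$ are empty and the inequality holds with equality. Any slack is lost only in replacing $\Delta^{2j} - 1$ by $\Delta^{2j}$ and $1 + \Delta^{i+1/2}$ by $\Delta^{i+1/2}$, which is more than affordable for the stated bound.
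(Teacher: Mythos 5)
Your proof is correct and takes essentially the same route as the paper's: both pair the denominator factors from $u\Delta^j$ and $-u\Delta^j$ to form $\Delta^{2j}-1$, and both then bound $1+\Delta^{i+1/2} \ge \Delta^{i+1/2}$ and $\Delta^{2j}-1 \le \Delta^{2j}$ to extract the exponent $-(d-1)^2/2$. The only difference is cosmetic bookkeeping — you write out all three index families and peel off $i=j=0$, while the paper folds the pairing directly into one product indexed by $i=1,\dots,d-1$ and pulls out $\Delta^{(d-1)/2}$ as a separate factor.
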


\begin{proof}
We calculate
\begin{align*}
|p_u(-u)| &= \frac{u(\sqrt{\Delta} + 1)}{2u} \cdot \prod_{i = 1}^{d - 1} \frac{u(\Delta^i \sqrt{\Delta} + 1)}{u^2(\Delta^i + 1) (\Delta^i - 1)}&\tag{pairing terms corresponding to $u\Delta^i$ and $-u\Delta^i$}\\
& = \frac{\sqrt{\Delta} +  1}{2} \cdot u^{-(d-1)} \cdot \prod_{i = 1}^{d - 1}\frac{\Delta^{i + \frac{1}{2}} + 1}{\Delta^{2i} - 1} \ge \frac{\sqrt{\Delta} +  1}{2} \cdot u^{-(d-1)}\cdot \Delta^{(d - 1)/2} \cdot \prod_{i = 1}^{d - 1}\frac{\Delta^{i}}{\Delta^{2i}} \\
& = \frac{\sqrt{\Delta} + 1}{2} \cdot u^{-(d-1)} \cdot\Delta^{-(d-1)^2/2}.
\end{align*}

\end{proof}

The next claim tells us that the mass placed by $p_u$ on other points in its support is small.
\begin{claim} \label{claim:tails-small}
For every $j = 1, 2, \dots, d-1$,
\[|p_u(-u\Delta^j)| \le e^{4} \cdot \Delta^{-(j^2 - 3j - 2) / 2} \cdot \left(\frac{\sqrt{\Delta} + 1}{2}  \cdot u^{-(d-1)} \cdot \Delta^{-(d-1)^2/2}\right). \]
\end{claim}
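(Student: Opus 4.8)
The plan is to start from the closed form for $p_u$ restricted to $S_u$ and evaluate it at the point $t = -u\Delta^j$. Writing
\[
|p_u(-u\Delta^j)| = \frac{\prod_{i=0}^{d-1}\big|{-u\Delta^j - u\Delta^i\sqrt{\Delta}}\big|}{\prod_{s \in S_u \setminus\{-u\Delta^j\}}\big|{-u\Delta^j - s}\big|},
\]
I would group the denominator factors in pairs exactly as in the proof of Claim~\ref{claim:middle-big}: for each $k \ne j$ the factors coming from $s = u\Delta^k$ and $s = -u\Delta^k$ multiply, in absolute value, to $u^2\big|\Delta^{2j} - \Delta^{2k}\big|$, while the single unpaired factor, from $s = u\Delta^j$, equals $2u\Delta^j$. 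Since every numerator factor equals $u(\Delta^j + \Delta^{i+1/2})$, this produces the exact identity
\[
|p_u(-u\Delta^j)| = \frac{u^{1-d}}{2\Delta^j}\cdot\frac{\displaystyle\prod_{i=0}^{d-1}\big(\Delta^j + \Delta^{i+1/2}\big)}{\displaystyle\prod_{\substack{0 \le k \le d-1 \\ k \ne j}}\big|\Delta^{2j} - \Delta^{2k}\big|}.
\]

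Next I would pull out the dominant power of $\Delta$ from each factor, via $\Delta^j + \Delta^{i+1/2} = \Delta^{\max(j,\,i+1/2)}\big(1 + \Delta^{-|j - i - 1/2|}\big)$ and $\big|\Delta^{2j} - \Delta^{2k}\big| = \Delta^{2\max(j,k)}\big(1 - \Delta^{-2|j-k|}\big)$. The resulting exponent of $\Delta$ is a sum of arithmetic progressions, and a routine computation shows it equals $-\tfrac{(d-1)^2}{2} + \tfrac{1-j^2}{2}$, so that
\[
|p_u(-u\Delta^j)| = \frac{u^{1-d}}{2}\,\Delta^{-\frac{(d-1)^2}{2} + \frac{1-j^2}{2}}\cdot C_j, \qquad
C_j := \frac{\prod_{i=0}^{d-1}\big(1 + \Delta^{-|j-i-1/2|}\big)}{\prod_{k \ne j}\big(1 - \Delta^{-2|j-k|}\big)}.
\]
Dividing by the lower bound $\tfrac{\sqrt{\Delta}+1}{2}\,u^{-(d-1)}\Delta^{-(d-1)^2/2}$ from Claim~\ref{claim:middle-big} and using $\sqrt{\Delta}+1 > \sqrt{\Delta}$, the ratio is at most $C_j\,\Delta^{-j^2/2}$. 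Since $-j^2/2 \le -(j^2 - 3j - 2)/2$ and $\Delta > 1$, it then suffices to establish $C_j \le e^4$.

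Bounding $C_j$ is the one step that needs genuine care, and it is the only real obstacle: $C_j$ must be shown to be bounded by an absolute constant, independently of $d$, $u$, and $\Delta$. The key observation is that, as $i$ ranges over $\{0,\dots,d-1\}$, each half-integer $\ell + \tfrac12$ (with $\ell \ge 0$) equals $|j - i - \tfrac12|$ for at most two values of $i$, and similarly each positive integer $m$ equals $|j - k|$ for at most two values of $k \ne j$. Hence, using $1 + x \le e^x$ and summing a geometric series, the numerator of $C_j$ is at most $\exp\!\big(2\sum_{\ell \ge 0}\Delta^{-(\ell + 1/2)}\big) \le \exp(2\sqrt{2})$ whenever $\Delta \ge 2$, while the denominator is at least $\big(1 - \sum_{m \ge 1}\Delta^{-2m}\big)^2 \ge (2/3)^2$ for $\Delta \ge 2$. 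This gives $C_j \le \tfrac94 e^{2\sqrt{2}} < e^4$, the required bound, with substantial slack. Combining the three steps yields the claim; I expect the exponent bookkeeping in the middle step and the two geometric-series estimates in the last step to be the only things requiring attention, and the essential use of the hypothesis $\Delta \ge 2$ is precisely what keeps $C_j$ a constant.
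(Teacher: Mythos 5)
Your proof is correct and follows essentially the same route as the paper's: both pair the denominator factors coming from $\pm u\Delta^k$, extract the dominant power of $\Delta$, and bound the residual product by an absolute constant via geometric series. The main difference is stylistic — you write an exact identity and isolate the correction term $C_j$ before bounding it, while the paper proceeds by a chain of one-sided inequalities (e.g.\ using $\Delta^i\sqrt{\Delta}+\Delta^j \le \sqrt{\Delta}(\Delta^i+\Delta^j)$ early on) and absorbs slack into the $\Delta^{-(j^2-3j-2)/2}$ factor — but the underlying computation and the $e^4$ bound on the correction are the same.
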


\begin{proof}
We calculate
\begin{align*}
|p_u(-u\Delta^j)| &= \frac{u(\Delta^j\sqrt{\Delta} + \Delta^j)}{2u\Delta^j} \cdot \prod_{i = 0}^{j - 1} \frac{u(\Delta^i \sqrt{\Delta} + \Delta^j)}{u^2(\Delta^i + \Delta^j)(\Delta^{j} - \Delta^{i})} \cdot \prod_{i = j+1}^{d - 1} \frac{u(\Delta^i \sqrt{\Delta} + \Delta^j)}{u^2(\Delta^i + \Delta^j)(\Delta^{i} - \Delta^{j})} \tag{pairing terms corresponding to $u\Delta^i$ and $-u\Delta^i$}\\
& \le \frac{\sqrt{\Delta} +1}{2} \cdot u^{-(d-1)} \cdot \prod_{i = 0}^{j - 1} \frac{\sqrt{\Delta}}{\Delta^j - \Delta^{i}} \cdot \prod_{i = j+1}^{d - 1} \frac{\sqrt{\Delta}}{\Delta^i - \Delta^{j}}\\
&\le \frac{\sqrt{\Delta} +1}{2} \cdot (\sqrt{\Delta} \cdot u^{-1})^{d-1} \cdot \prod_{i = 0}^{j - 1} \frac{\Delta^{j-i} \cdot \Delta^{-j}}{\Delta^{j-i}-1} \cdot \prod_{i = j+1}^{d - 1} \frac{\Delta^{-i} \cdot \Delta^{i - j}}{\Delta^{i - j} - 1} \\
&\le \frac{\sqrt{\Delta} +1}{2} \cdot (\sqrt{\Delta} \cdot u^{-1})^{d-1} \cdot \prod_{i = 0}^{j - 1} \Delta^{-j} \cdot \prod_{i = j+1}^{d - 1} \Delta^{-i} \cdot \left(\prod_{k = 1}^\infty \frac{\Delta^k}{\Delta^k - 1}\right)^2 \\
&\le \frac{\sqrt{\Delta} + 1}{2}  \cdot (\sqrt{\Delta} \cdot u^{-1})^{d-1} \cdot \Delta^{-j^2 - (d(d-1) - (j+2)(j+1))/2} \cdot \exp \left(2 \sum_{k = 1}^\infty \frac{1}{\Delta^k - 1}\right) \tag{since $1 + x \leq e^x$ for all $x \in \mathbb{R}$}\\
& \le \frac{\sqrt{\Delta} + 1}{2} \cdot u^{-(d-1)} \cdot \Delta^{-(j^2 - 3j - 2) / 2} \cdot \Delta^{-(d-1)^2/2} \cdot \exp \left(4 \sum_{k = 1}^\infty \frac{1}{\Delta^k}\right) \tag{since $\Delta \geq 2$}\\
&\le e^{4} \cdot \frac{\sqrt{\Delta} + 1}{2} \cdot u^{-(d-1)} \cdot \Delta^{-(j^2 - 3j - 2) / 2} \cdot \Delta^{-(d-1)^2/2} \tag{again using $\Delta \geq 2$}.
\end{align*}

\end{proof}

The following claim tells us that for each $u$ and $j$, the masses placed by $r_u$ (and hence $p_u$) on $u \Delta^j$ and $-u \Delta^{j}$ are comparable.
\begin{claim} \label{claim:approx-symmetry}
For every $j = 0, 1, \dots, d-1$, we have \begin{align*}
|r_u(-u\Delta^j)| & \ge |r_u(u\Delta^j)| \ge \exp(-18/\sqrt{\Delta}) |r_u(-u\Delta^j)|\\
\text{and~} |p_u(-u\Delta^j)| & \ge |p_u(u\Delta^j)| \ge \exp(-18/\sqrt{\Delta}) |p_u(-u\Delta^j)|.
\end{align*}
\end{claim}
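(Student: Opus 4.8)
The plan is to directly compute the ratio $|r_u(u\Delta^j)| / |r_u(-u\Delta^j)|$ and show it lies in the interval $[\exp(-18/\sqrt{\Delta}), 1]$; the bound for $p_u$ then follows immediately since $p_u(\pm u\Delta^j) = \binom{2n}{n \mp u\Delta^j} r_u(\pm u\Delta^j)$ and the two binomial coefficients $\binom{2n}{n - u\Delta^j}$ and $\binom{2n}{n + u\Delta^j}$ are \emph{equal}. So the entire content of the claim is the statement about $r_u$.

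To compute the ratio, I would use the product formula for $r_u$ and pair up, for each index $i \in \{0, 1, \dots, d-1\}$, the factors of $r_u(u\Delta^j)$ and $r_u(-u\Delta^j)$ coming from the points $\pm u\Delta^i$ in $S_u$, together with the ``shifted'' factor $t - u\Delta^i\sqrt{\Delta}$. Evaluating at $t = \pm u\Delta^j$ and taking absolute values, for each $i$ the contribution to $|r_u(u\Delta^j)|$ is $|u\Delta^j - u\Delta^i\sqrt\Delta| \cdot |u\Delta^j - u\Delta^i| \cdot |u\Delta^j + u\Delta^i|$ (the last factor from $t-(-u\Delta^i)$), while the contribution to $|r_u(-u\Delta^j)|$ is $|u\Delta^j + u\Delta^i\sqrt\Delta| \cdot |u\Delta^j + u\Delta^i| \cdot |u\Delta^j - u\Delta^i|$. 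The factors from $s \notin S_u$ cancel identically between numerator and denominator? No — careful: those factors differ too, since $t$ differs; but the contribution of a point $s \notin S_u$ to $|r_u(u\Delta^j)|$ is $|u\Delta^j - s|$ versus $|u\Delta^j+s|$ to $|r_u(-u\Delta^j)|$. So really every point $s$ (including those outside $S_u$) and the $d$ shifted points contribute a ratio factor. Rather than track all of them, the cleanest route is: $r_u$ is (up to the constant $1/(2n)!$) a polynomial with all roots real, so $|r_u(t)|$ as a function of real $t$ can be written as $\prod |t - \rho|$ over all roots $\rho$ (the $s \notin S_u$, which come in $\pm$ pairs together with $0$, plus the $d$ points $u\Delta^i\sqrt\Delta$). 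The roots that come in $\pm$ pairs (i.e. all of $\{-n,\dots,n\}\setminus S_u$ and also $0$) contribute a ratio of exactly $1$ when comparing $t=u\Delta^j$ with $t=-u\Delta^j$, since $|a - x||a+x| = |x-a||x+a|$ is symmetric. Hence the ratio collapses to a product over only the $d$ positive shifted roots $u\Delta^i\sqrt\Delta$, $i = 0,\dots,d-1$:
\[
\frac{|r_u(u\Delta^j)|}{|r_u(-u\Delta^j)|} = \prod_{i=0}^{d-1} \frac{|u\Delta^j - u\Delta^i\sqrt\Delta|}{|u\Delta^j + u\Delta^i\sqrt\Delta|} = \prod_{i=0}^{d-1} \frac{|\Delta^{j-i} - \sqrt\Delta|}{\Delta^{j-i} + \sqrt\Delta}.
\]
Each factor is at most $1$, giving the upper bound $|r_u(-u\Delta^j)| \ge |r_u(u\Delta^j)|$ immediately. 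For the lower bound I would bound each factor below by $1 - \frac{2\sqrt\Delta}{\Delta^{j-i}+\sqrt\Delta} \ge 1 - 2\Delta^{1/2 - |j-i|}$ (using $|a-b|/(a+b) = 1 - 2\min(a,b)/(a+b)$ and $\min(\Delta^{j-i}, \sqrt\Delta) \le \sqrt\Delta \cdot \Delta^{-(|j-i|-?)}$ — one has to be a little careful when $j-i$ is negative or zero, but in all cases $\min(\Delta^{j-i}\sqrt\Delta^{-1}, \sqrt\Delta \cdot \Delta^{-j+i})$-type estimates give a factor $\ge 1 - 3\Delta^{-|j-i|+1/2}$ say), then use $1+x \ge e^{2x}$ for $x \in [-1/2, 0]$ (valid since $\Delta \ge 2$ makes each correction term less than $1/2$ when $i = j$? the $i=j$ term is $(\sqrt\Delta - 1)/(\sqrt\Delta+1) \ge e^{-?}$, handled separately) to convert the product into $\exp$ of a geometric-type sum $-O(\sum_{k} \Delta^{-k+1/2})$, which is at most $18/\sqrt\Delta$ for $\Delta \ge 2$.

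The main obstacle is purely bookkeeping: getting the constant right and handling the boundary terms where $j - i \in \{-1, 0, 1\}$ (where the naive bound $1 - 2\Delta^{1/2-|j-i|}$ is not automatically bounded away from, say, $0$, so one cannot blindly apply $1+x \ge e^{2x}$). The fix is to split the product into the single term $i = j$, which equals $(\sqrt\Delta-1)/(\sqrt\Delta+1) \ge \exp(-4/\sqrt\Delta)$ for $\Delta \ge 2$ (a direct check), the term $i = j-1$ (so $\Delta^{j-i} = \Delta$), which equals $(\Delta - \sqrt\Delta)/(\Delta+\sqrt\Delta) \ge \exp(-2/\sqrt\Delta - O(1/\Delta))$, and the geometric tail $|j - i| \ge 2$, each factor of which is at least $1 - 2\Delta^{-3/2} \cdot \Delta^{-(|j-i|-2)}$ and multiplies to $\exp(-O(\Delta^{-3/2}))$. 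Summing these contributions and crudely rounding the constants up yields the bound $\exp(-18/\sqrt\Delta)$ claimed; I would not optimize the constant $18$ since it is not on the critical path for the final sign-rank bound. The passage from $r_u$ to $p_u$ requires only the observation $\binom{2n}{n+t} = \binom{2n}{n-t}$, which makes the two multiplicative corrections identical, so the same two-sided inequality transfers verbatim.
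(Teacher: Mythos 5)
Your proposal is correct and follows essentially the same route as the paper: reduce $|r_u(u\Delta^j)|/|r_u(-u\Delta^j)|$ to a product over only the $d$ shifted roots $u\Delta^i\sqrt\Delta$ by exploiting the $\pm$-symmetry of the remaining roots, observe each factor is less than $1$ for the first inequality, and lower-bound the product by exponential estimates; the passage to $p_u$ via $\binom{2n}{n+t}=\binom{2n}{n-t}$ is identical. The paper's treatment of the final exponential bound is cleaner and avoids your case split: writing each factor as $(\Delta^k - 1)/(\Delta^k+1)$ with $k = |j-i-\tfrac12| \in \{\tfrac12, \tfrac32, \dots\}$, it lower-bounds the whole product by $\bigl(\prod_{i\ge 1}(\Delta^{i/2}-1)/(\Delta^{i/2}+1)\bigr)^2$ and applies the single inequality $(a-1)/(a+1) > \exp(-2.5/a)$ for $a \ge \sqrt 2$ uniformly, with a geometric sum finishing the job. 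One small slip in your split: parameterizing by $|j-i|$ rather than by the half-integer $|j-i-\tfrac12|$ leaves the term $i=j+1$ uncovered (it has $|j-i|=1$ but exponent $3/2$, so it is not dangerous and belongs with the tail; $i=j$ and $i=j-1$ are the symmetric pair at exponent $1/2$ that truly require special handling). This is a bookkeeping fix, not a conceptual one.
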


\begin{proof}
We may write the ratio
\[
\frac{|p_u(u\Delta^j)|}{|p_u(-u\Delta^j)|} = \frac{|r_u(u\Delta^j)|}{|r_u(-u\Delta^j)|} = \prod_{i = 0}^{j-1} \frac{u(\Delta^j - \Delta^i\sqrt{\Delta})}{u(\Delta^j + \Delta^i \sqrt{\Delta})} \cdot \prod_{i = j}^{d-1} \frac{u(\Delta^i\sqrt{\Delta} - \Delta^j)}{u(\Delta^j + \Delta^i \sqrt{\Delta})}.
\]
This is a product of terms smaller than $1$, yielding the first inequality. For the second, we follow Sherstov's argument~\cite[Theorem 5.3]{Sherstov13} and note that this product is at least
\begin{align*}
\left(\prod_{i =1}^\infty \frac{\Delta^{i/2} - 1}{\Delta^{i/2} + 1} \right)^2 &\ge \exp \left( -5 \sum_{i = 1}^\infty \frac{1}{\Delta^{i/2}} \right) \tag*{since $(a-1)/(a + 1) > \exp(-2.5/a)$ for $a \geq \sqrt{2}$}\\
& = \exp\left(\frac{-5}{\sqrt{\Delta}}\sum_{i = 0}^\infty \frac{1}{\Delta^{i/2}}\right) \geq \exp\left(\frac{-5}{\sqrt{\Delta}} \cdot \frac{1}{1 - 1/\sqrt{2}}\right)\tag*{since $\Delta \geq 2$}\\
&\ge \exp\left(-\frac{18}{\sqrt{\Delta}}\right).
\end{align*}
\end{proof}

Putting the three claims together, we obtain the following conclusion, which states that the mass placed by $p_u$ on $-u$ and $u$ is a relatively large fraction of its $\ell_1$-norm.

\begin{lemma} \label{lemma:two-point-smoothness}
$|p_u(-u)| \geq \|p_u\|_1/(8\Delta^{2}e^4)$ and
$|p_u(-u)| \ge |p_u(u)| \ge \frac{ \exp(-18/\sqrt{\Delta} - 4) }{8\Delta^{2}} \cdot \|p_u\|_1$.
\end{lemma}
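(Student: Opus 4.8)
\textbf{Proof plan for Lemma \ref{lemma:two-point-smoothness}.}

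The plan is to combine the three preceding claims. The only quantity I need to control from above is $\|p_u\|_1 = \sum_{t \in S_u} |p_u(t)|$, since $p_u$ is supported on $S_u = \{\pm u, \pm u\Delta, \dots, \pm u\Delta^{d-1}\}$. By Claim~\ref{claim:approx-symmetry}, for each $j$ the mass on $u\Delta^j$ is at most the mass on $-u\Delta^j$ (up to the factor $\exp(18/\sqrt{\Delta}) \le e^4$ for $\Delta \ge 2$ going the other way, but for an \emph{upper} bound on $\|p_u\|_1$ I just use $|p_u(u\Delta^j)| \le |p_u(-u\Delta^j)|$), so $\|p_u\|_1 \le 2\sum_{j=0}^{d-1} |p_u(-u\Delta^j)|$. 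Now apply Claim~\ref{claim:tails-small} to bound each $|p_u(-u\Delta^j)|$ for $j \ge 1$ relative to the ``base'' quantity $B := \frac{\sqrt{\Delta}+1}{2} u^{-(d-1)} \Delta^{-(d-1)^2/2}$: we get $|p_u(-u\Delta^j)| \le e^4 \Delta^{-(j^2-3j-2)/2} B$, and for $j=0$ we have $|p_u(-u)| \le B$ by Claim~\ref{claim:middle-big} together with... actually I should be careful: Claim~\ref{claim:middle-big} gives a \emph{lower} bound $|p_u(-u)| \ge B$, and inspection of that calculation (the product $\prod_i \frac{\Delta^{i+1/2}+1}{\Delta^{2i}-1}$ is easily seen to be at most, say, $2$) gives a matching crude upper bound; alternatively note $j=0$ is covered by the $j\ge 0$ version of Claim~\ref{claim:tails-small}-style estimate. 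So $\|p_u\|_1 \le 2B\bigl(1 + e^4 \sum_{j=1}^{\infty} \Delta^{-(j^2-3j-2)/2}\bigr)$.

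The geometric-type sum $\sum_{j \ge 1} \Delta^{-(j^2-3j-2)/2}$ is dominated by its largest terms. The exponent $-(j^2-3j-2)/2$ is maximized at $j \approx 3/2$, where $j^2-3j-2 = -17/4$, so the largest terms are $\Delta^{17/8}$-ish; more simply, $-(j^2-3j-2)/2 \le (j^2-3j-2)\cdot(-1/2) \le 2$ is false — let me just bound it as follows: for $j \in \{1,2\}$ the exponent is $2$ and $2$ respectively (at $j=1$: $-(1-3-2)/2 = 2$; at $j=2$: $-(4-6-2)/2 = 2$), so those two terms contribute at most $2\Delta^2$; for $j \ge 3$ the exponent is $-(j^2-3j-2)/2 \le -(j-2)/2$ or some such decaying bound, and $\sum_{j\ge 3}\Delta^{-(j-2)/2} = O(1)$ since $\Delta \ge 2$. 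Hence the whole sum is $O(\Delta^2)$, and one can extract the explicit constant to get $\|p_u\|_1 \le 8 e^4 \Delta^2 \cdot B$ (absorbing the $j=0$ term and the $+1$ into the constant). Combining with $|p_u(-u)| \ge B$ from Claim~\ref{claim:middle-big} yields the first inequality $|p_u(-u)| \ge \|p_u\|_1/(8\Delta^2 e^4)$.

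For the second chain of inequalities, $|p_u(-u)| \ge |p_u(u)|$ is exactly the first inequality in Claim~\ref{claim:approx-symmetry} (case $j=0$), and $|p_u(u)| \ge \exp(-18/\sqrt{\Delta})|p_u(-u)|$ is the second; chaining this with the just-proved $|p_u(-u)| \ge \|p_u\|_1/(8\Delta^2 e^4)$ gives $|p_u(u)| \ge \exp(-18/\sqrt{\Delta}) \cdot \|p_u\|_1/(8\Delta^2 e^4) = \frac{\exp(-18/\sqrt{\Delta}-4)}{8\Delta^2}\|p_u\|_1$, as claimed. I expect the only mildly fiddly point to be pinning down the constant in the bound on the tail sum $\sum_j \Delta^{-(j^2-3j-2)/2}$ so that it fits inside the stated $8\Delta^2 e^4$; everything else is a direct quotation of Claims~\ref{claim:middle-big}, \ref{claim:tails-small}, and~\ref{claim:approx-symmetry}.
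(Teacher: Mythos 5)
Your proposal is correct and follows essentially the same route as the paper: use the first inequality of Claim~\ref{claim:approx-symmetry} to reduce $\|p_u\|_1$ to the negative half of the support, apply Claims~\ref{claim:middle-big} and~\ref{claim:tails-small} to control the resulting ratios, sum the geometric-type series in $j$ (the exponent $-(j^2-3j-2)/2$ peaks at $2$ for $j=1,2$, exactly as you compute), and finish with the second inequality of Claim~\ref{claim:approx-symmetry}. The one small simplification the paper makes, which dissolves the issue you flagged mid-argument, is to work directly with the ratio $\|p_u\|_1/|p_u(-u)|$ rather than with $\|p_u\|_1$ itself: then the $j=0$ term contributes exactly $1$, and no upper bound on $|p_u(-u)|$ is ever needed, since Claim~\ref{claim:tails-small} over $B$ and Claim~\ref{claim:middle-big} under $B$ combine to bound $|p_u(-u\Delta^j)|/|p_u(-u)| \le e^4 \Delta^{-(j^2-3j-2)/2}$ for $j \ge 1$.
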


\begin{proof}
We bound the ratio
\begin{align*}
\frac{\|p_u\|_1}{|p_u(-u)|} &\le 2\sum_{j = 0}^{d-1} \frac{|p_u(-u\Delta^j)|}{|p_u(-u)|} \tag*{ by the first inequality in Claim~\ref{claim:approx-symmetry}} \\
&\le 2\left( 1 +  \sum_{j = 0}^{d-1} e^{4} \Delta^{-(j^2 - 3j - 2) / 2}\right) \tag*{ by Claims~\ref{claim:middle-big} and \ref{claim:tails-small}} \\
& \leq 2 + 2 e^{4} \left(\sum_{j = 0}^{3} \Delta^{-(j^2 - 3j - 2) / 2} + \sum_{j = 4}^{\infty} \Delta^{-(j^2 - 3j - 2) / 2}\right)\\
&\le 8 \Delta^{2}\cdot e^{4} \cdot \sum_{k = 1}^\infty \Delta^{-k} \le 8 \cdot \Delta^{2} \cdot e^{4}. \tag*{since $\Delta \geq 2$}
\end{align*}
By the above and the second inequality in Claim~\ref{claim:approx-symmetry},
\[
|p_u(u)| \geq \exp(-18/\sqrt{\Delta})|p_u(-u)| \geq \frac{\exp(-18/\sqrt{\Delta} - 4)}{8\Delta^2} \cdot \|p_u\|_1.
\]
\end{proof}

We are now ready to prove Theorem~\ref{thm:smooth-dual}.

\begin{proof}[Proof of Theorem~\ref{thm:smooth-dual}]
Define the function
$P(t) = \sum_{u = 1}^{\lfloor n^{2/3} \rfloor} u^{20} \cdot \frac{p_u(t)}{\|p_u\|_1}.$
We claim that the function $R : \bra{0, \pm 1, \dots, \pm n} \to \pmone$ defined by 
$
R(t) = \frac{(-1)^tP(t)}{\|P\|_1}
$
satisfies the conditions in Theorem~\ref{thm:smooth-dual}.
\begin{itemize}
    \item Clearly, 
    $
    \sum_{t = -n}^n|R(t)| = 1,
    $ i.e., $R$ satisfies Equation \eqref{eqn:sgn-L1}.
    \item By Claim~\ref{claim:approx-symmetry}, for every $u = 1, \dots, \lfloor n^{2/3} \rfloor$ and every $t = 1, \dots, n$,~ $(-1)^t p_u(t) \ge \delta |p_u(-t)|$ for $\delta = \exp(-18/\sqrt{\Delta}) = \exp(-18/n^{(1/6d)})$. Therefore, for all such $t$ we also have $(-1)^t P(t) \ge \delta |P(t)|$, which implies $R(t) \geq \delta|R(-t)|$ for every $t = 1, 2, \dots, n$.
    \item We have
    \[
    R(t) = \frac{(-1)^tP(t)}{\|P\|_1} =\frac{(-1)^t}{\|P\|}\sum_{u = 1}^{\lfloor n^{2/3} \rfloor} u^{20} \cdot \frac{p_u(t)}{\|p_u\|_1} = \frac{(-1)^t}{\|P\|_1} \binom{2n}{n + t}\sum_{u = 1}^{\lfloor n^{2/3} \rfloor}u^{20} \cdot \frac{r_u(t)}{\|p_u\|_1}.
    \]
Since each $r_u$ is a polynomial of degree at most $(2n + 1) - d$, Claim~\ref{claim:combinatorial-identity} implies that for any polynomial $p$ of degree at most $d - 2$,
$
\langle R, p \rangle = 0.
$
\item It now remains to verify the smoothness condition. Fix a point $v \in \{1, \dots, \lfloor n^{2/3} \rfloor\}$. Since $\sgn(p_u(v)) = (-1)^v$ for all $u$ and for all $v > 0$, we have that
\begin{align*}
\frac{|P(v)|}{\|P\|_1} &\ge v^{20} \cdot \frac{ |p_v(v)| \cdot \|p_v\|_1^{-1}}{\sum_{u = 1}^{\lfloor n^{2/3} \rfloor}u^{20}} \ge \frac{\exp(-18/\sqrt{\Delta} - 4) / 8\Delta^{2}}{\lfloor n^{2/3} \rfloor \cdot (\lfloor n^{2/3} \rfloor)^{20}} \tag*{by Lemma~\ref{lemma:two-point-smoothness}}\\
& \geq \frac{\exp(-18/\sqrt{2} - 4)}{8n^{15}} \ge \frac{e^{-15}}{8n^{15}}. \tag*{since $n^{1/3} \geq \Delta = \lfloor n^{1/3d} \rfloor \geq 2$}
\end{align*}
If $v < 0$, the argument needs some more care because we do not have the guarantee that $\sgn(p_u(v)) = (-1)^v$.  The large mass placed by $p_{-v}$ on the point $v$ plays a crucial role.
\begin{align*}
|P(v)| &\ge \frac{(-v)^{20} \cdot |p_{-v}(v)|}{\|p_{-v}\|_1} - \sum_{\substack{u = 1 \\ u\ne v}}^{\lfloor n^{2/3} \rfloor} u^{20} \cdot \frac{p_u(-v)}{\|p_u\|_1} \\
&\ge \frac{(-v)^{20}}{8\Delta^2e^4} - \sum_{j = 1}^{\lfloor \log_{\Delta}(-v) \rfloor}(-v\Delta^{-j})^{20} \cdot \frac{p_{-v\Delta^{-j}}(v)}{\|p_{-v\Delta^{-j}}\|_1} \tag*{by Lemma~\ref{lemma:two-point-smoothness}, the definition of $p_u$ and its support}\\
& \geq (-v)^{20}\left[\frac{1}{8\Delta^2e^4} - e^4\sum_{j = 1}^\infty \Delta^{-20j} \cdot \Delta^{(-j^2 + 3j + 2)/2}\right] \tag*{ by Claims~\ref{claim:middle-big} and \ref{claim:tails-small}}\\
& = (-v)^{20}\left[\frac{1}{8\Delta^2e^4} - e^4\sum_{j = 1}^\infty \Delta^{(-j^2 - 37j + 2)/2}\right]
\geq (-v)^{20}\left[\frac{1}{8\Delta^2e^4} - e^4\sum_{j = 1}^\infty \Delta^{-18j}\right]\\
& \geq (-v)^{20}\left[\frac{1}{8\Delta^2e^4} -\frac{e^4}{\Delta^{17}}\right] = (-v)^{20}\left[\frac{\Delta^{15} - 8e^8}{8\Delta^{17}e^4}\right] 
\geq \frac{20}{\Delta^{17}} \geq \frac{20}{n^6}\tag*{since $n^{1/3} \geq \Delta \geq 2$ and $(-v) \geq 1$}
\end{align*}
Thus, we have that for $v < 0$,
\[
\frac{|P(v)|}{\|P\|_1} \geq \frac{20}{n^6{\sum_{u = 1}^{\lfloor n^{2/3} \rfloor}u^{20}}} \geq \frac{20}{n^6\lfloor n^{2/3} \rfloor \cdot (\lfloor n^{2/3} \rfloor)^{20}} \geq \frac{20}{ n^{20}}.
\]
\end{itemize}
\end{proof}

\section{Conclusion}
We have exhibited a communication problem $f$ with $\UPPcc(f)= O(\log n)$, and
$\UPPcc(f \wedge f) = \Theta(\log^2 n)$. 
This is the first result showing that $\UPP$ communication complexity
can increase by more than a constant factor under intersection.
As a consequence, we have concluded that the dimension-complexity-based quasipolynomial time PAC learning algorithm of \cite{KOS04}
for learning intersections of polylogarithmically many majorities is optimal. That is, 
new learning algorithms not based on dimension complexity will be required to learn this class in polynomial time.

A glaring open question left by our work is whether 
the class of problems with polylogarithmic $\UPPcc$
complexity is closed under intersection. Our results represent an important first step 
in this direction. 
It would also be very interesting to extend our result that dimension-complexity-based algorithms cannot PAC learn
intersections of two majorities in polynomial time, to rule out an even larger class of learning algorithms. 
Specifically, it would be very interesting to show that no algorithm working in the important \emph{statistical query} model
 \cite{sqmodel} can learn this concept class in polynomial time.

\bibliography{bibo}

\appendix

\section{Other Related Work on Sign Rank}
Alon, Frankl and R{\"{o}}dl~\cite{AFR85}
proved lower bounds on the sign-rank of random matrices. The first nontrivial lower bounds for
explicit matrix families was obtained in the breakthrough work of Forster \cite{Forster02}, who proved strong
lower bounds on the sign-rank of  any sign matrix
with small spectral norm. Several subsequent works improved and generalized Forster's method, and studied
the relationship of sign-rank to other important complexity measures in learning theory and circuit complexity
\cite{alon2014sign, forsteretal, forster2002smallest, linial2007complexity}. 
As previously mentioned, Sherstov \cite{sherstovsymm} and Razborov and Sherstov \cite{RS10} showed 
that a smooth dual witness for the fact that a function has large threshold degree implies
that a related function has large sign-rank. Razborov and Sherstov used this result to great effect,
constructing a smooth dual witnesses for the large threshold degree of a certain DNF, and thereby 
giving an AC$^0$ function with exponential sign-rank. This answered an old question of Babai, Frankl, and Simon \cite{BFS86}. 
Recent works have quantitatively strengthened these sign-rank lower bounds for AC$^0$ \cite{BT16, BT18, SheW19}.
Another work, due to Bouland et al.~\cite{BCHTV17}, managed to apply the methods to \emph{simpler}
functions within AC$^0$, and thereby resolved several old open questions about the power of statistical zero knowledge proofs. 

\section{Threshold Degree, Rational Approximate Degree, Sign-Rank and Duality}
\label{app}
In this appendix, we define threshold degree and introduce its dual formulation, 
and also introduce the dual formulation of rational approximate degree. We also
 explain the connection between the threshold degree, rational approximate degree, and the sign-rank
 of pattern matrices.
The material in this appendix is standard and provided only for intuition; none of it is actually required 
to prove the results in this paper. 

\subsection{Threshold Degree and Its Dual Formulation}\label{app:tdeg}
The \emph{threshold degree} of a Boolean function $f\colon \{-1, 1\}^n \to \bits$, denoted $\deg_{\pm}(f)$, 
 is the least degree of a real polynomial $p$ that sign-represents $f$, i.e., $p(x) \cdot f(x) > 0$
for all $x \in \{-1, 1\}^n$.

To describe
	the dual formulation of threshold degree, we need to introduce some terminology.
	
	\begin{definition}
	Let $\psi \colon \{-1, 1\}^n \to \R$ be any real-valued function on the Boolean hypercube. 
	Recall (see Section \ref{s:prelims}) that, given another function $p \colon \{-1, 1\}^n \to \R$, we let $\langle \psi, p\rangle := \sum_{x \in \{-1, 1\}^n} \psi(x) \cdot p(x)$,
	and refer to $\langle \psi, p\rangle$ as the \emph{correlation} of $\psi$ with $p$. Also, $\|\psi\|_1 := \sum_{x \in \{-1, 1\}^n} |\psi(x)|$, and we refer to $\|\psi\|_1$ as the $\ell_1$-norm of $\psi$.
	If $\langle \psi, p\rangle = 0$ for all polynomials $p$ of degree at most $d$, we say that $\psi$ has \emph{pure high degree} at least $d$.
	\end{definition}

	The following standard theorem provides the aforementioned dual formulation of threshold degree.
	\begin{theorem}\label{theorem:tdegdual}
		
		Let $f:\{-1, 1\}^n \to \{-1,1\}$. Then $\deg_{\pm}(f) > d$ if and only if there is a real function $\psi: \{-1, 1\}^n \to \R$ such that:
		\begin{enumerate}
			\item (Pure high degree): $\psi$ has pure high degree at least $d$.
			\item (Non-triviality): $\|\psi\|_1 > 0$.
			\item (Sign Agreement): $\psi(x) \cdot f(x) \ge 0$ for all $x \in \{-1, 1\}^n$.
		\end{enumerate}
		$\psi$ is called a \emph{dual polynomial} for the fact that $\deg_{\pm}(f) > d$.
	\end{theorem}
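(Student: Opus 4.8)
The plan is to recognize this as the standard linear-programming-duality (equivalently, separating-hyperplane) characterization of sign-representability, and to prove it by applying Gordan's theorem on the solvability of systems of strict linear inequalities. First I would set up the finite-dimensional picture: since every function on $\{-1,1\}^n$ is a multilinear polynomial, the space of polynomials of degree at most $d$ is a real vector space with basis the multilinear monomials $\{m_T : |T| \le d\}$. For each input $x \in \{-1,1\}^n$, associate the vector $v_x \in \R^{\binom{n}{\le d}}$ whose $T$-th coordinate is $f(x)\cdot m_T(x)$. A degree-$\le d$ polynomial $p$ with coefficient vector $c$ sign-represents $f$, i.e.\ $p(x)f(x) > 0$ for all $x$, exactly when $\langle c, v_x\rangle > 0$ for every $x$. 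Hence $\deg_{\pm}(f) \le d$ if and only if the system $\{\langle c, v_x\rangle > 0 : x \in \{-1,1\}^n\}$ is solvable in $c$.

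Next I would invoke Gordan's theorem (or the separating-hyperplane theorem applied to the convex hull of $\{v_x : x \in \{-1,1\}^n\}$ and the origin): exactly one of the following holds --- either there is a $c$ with $\langle c, v_x\rangle > 0$ for all $x$, or there are coefficients $\lambda_x \ge 0$, not all zero, with $\sum_x \lambda_x v_x = 0$. Therefore $\deg_{\pm}(f) > d$ if and only if such nonnegative $\lambda_x$, not all zero, exist.

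Finally I would translate between the $\lambda_x$ and the claimed dual object. Given such $\lambda_x$, define $\psi(x) := f(x)\lambda_x$; since $f(x) \in \{-1,1\}$ we get $\psi(x)f(x) = \lambda_x \ge 0$ (Sign Agreement), $\|\psi\|_1 = \sum_x \lambda_x > 0$ (Non-triviality), and $\langle \psi, m_T\rangle = \sum_x \lambda_x f(x) m_T(x) = \bigl(\sum_x \lambda_x v_x\bigr)_T = 0$ for every $|T| \le d$, hence $\langle \psi, p\rangle = 0$ for all polynomials $p$ of degree at most $d$ by linearity (Pure high degree $\ge d$). Conversely, given $\psi$ with these three properties, set $\lambda_x := f(x)\psi(x) = |\psi(x)| \ge 0$, using Sign Agreement and $f(x) \in \{-1,1\}$; then $\sum_x \lambda_x = \|\psi\|_1 > 0$ so the $\lambda_x$ are not all zero, and $\sum_x \lambda_x v_x = 0$ coordinatewise by Pure high degree, so the second alternative of Gordan's theorem holds and $\deg_{\pm}(f) > d$.

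Every step here is routine, so I expect no real obstacle; the only point needing a little care is matching strict versus nonstrict inequalities when applying Gordan's theorem --- the sign-representation condition is strict, which is precisely what delivers the ``not all zero'' clause on the dual side --- and observing that because $f$ is $\pm1$-valued the substitution $\psi = f \cdot \lambda$ is an involution identifying ``$\lambda_x \ge 0$'' with ``$\psi(x)f(x) \ge 0$.''
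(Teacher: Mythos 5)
Your proof is correct, and it is the standard LP-duality argument (here packaged as Gordan's theorem of the alternative) for the dual characterization of threshold degree. The paper itself states Theorem~\ref{theorem:tdegdual} as a known fact and gives no proof, so there is nothing to compare against; your argument is exactly the one the authors have in mind. The one step worth double-checking — that the strictness of the sign-representation inequalities $\langle c, v_x\rangle > 0$ is what produces the ``not all zero'' clause on the dual side — you handle correctly via Gordan's theorem, and the involution $\psi \leftrightarrow \lambda$ using $f(x)^2 = 1$ is exactly right.
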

	
	\medskip \noindent \textbf{The relationship between threshold degree and sign-rank of pattern matrices.}
	Suppose that $\deg_{\pm}(f) = d$. Then it is not hard to see that the $(N, n, f)$-pattern matrix $M$
	has sign-rank at most $\binom{N}{d} \leq N^d$. This is because each entry $M_{x, (S, w)}$ of $M$ can be written
	as the sign of a linear combination of monomials in $x$, where each monomial in the linear combination has degree at most $d$.
Theorem \ref{thm: sthrdegtosr} shows that this sign-rank upper bound is essentially tight,
	so long as there is a dual polynomial $\psi$ for that fact that $\deg_{\pm}(f) \geq d$, such that $\psi$ satisfies an extra smoothness condition.

\subsection{Rational Approximate Degree and Its Dual Formulation}
In Section \ref{s:ratdeg}, we defined the rational $\epsilon$-approximate degree of $f$ to be
the least total degree of real polynomials $p$ and $q$ such that $|f(x)-p(x)/q(x)| \leq \epsilon$
for all $x$ in the domain of $f$.
The functions $\psi_0$ and $\psi_1$ appearing in
Theorem \ref{thm: srdegtosthrdeg} constitute a dual witness to the large rational approximate
degree of $f$. However, to describe the duality theory of rational approximate degree,
it is helpful to introduce a related notion, due to Sherstov \cite{Sherstov14}.

\begin{definition}[Sherstov \cite{Sherstov14} Definition 6.4] \label{def: rfd0d1}
For $d_0, d_1 > 0$ and a function $f \colon \pmone^n \to \pmone$, define
$R(f, d_0, d_1)$ as the infimum over all $\epsilon > 0 $ for which there exist polynomials $p_0, p_1$ of
degree at most $d_0, d_1$, respectively, such that:
\begin{itemize}
\item $f(x) = 1 \Longrightarrow  |p_1(x)| < \epsilon \cdot p_0(x)$,
\item $f(x) = -1 \Longrightarrow |p_0(x)| < \epsilon \cdot p_1(x)$.
\end{itemize}
\end{definition}

\medskip
\noindent \textbf{The relationship between rational approximate degree and Definition~\ref{def: rfd0d1}.}
Sherstov \cite{Sherstov14} showed that, for constant $\epsilon > 0$, the rational $\epsilon$-approximate degree of $f$
is at most $O(d)$ if and only if there is some constant $c>0$ such that $R(f, c \cdot d, c \cdot d) \leq \epsilon$. 
And unlike rational approximate degree itself, $R(f, d_0, d_1)$ has a clean dual characterization.

\begin{theorem}[Sherstov \cite{Sherstov14} Theorem 6.6] \label{thm: rdegequivalence}
For $d_0, d_1 > 0$ and a function $f \colon \pmone^n \to \pmone$, $R(f, d_0, d_1) \geq \epsilon$ if and only if 
there exist $\psi_0, \psi_1 \colon \pmone^n \to \R$ such that 
\begin{itemize}
\item $\psi_0(x) \geq \epsilon |\psi_1(x)|$ for all $x \in f^{-1}(1)$.
\item $\psi_1(x) \geq \epsilon |\psi_0(x)|$ for all $x \in f^{-1}(-1)$.
\item $\psi_0$ has pure high degree at least $d_0$.
\item $\psi_1$ has pure high degree at least $d_1$.
\item $\|\psi_0\|_1, \|\psi_1\|_1 >0$.
\end{itemize}
\end{theorem}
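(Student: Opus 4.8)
The plan is to prove Theorem~\ref{thm: rdegequivalence} by linear programming duality, following the standard template for dual characterizations of approximate and threshold degree. By Definition~\ref{def: rfd0d1}, the assertion $R(f, d_0, d_1) < \epsilon'$ says that there exist $p_0$ of degree at most $d_0$ and $p_1$ of degree at most $d_1$ with $|p_1(x)| < \epsilon' p_0(x)$ for $x \in f^{-1}(1)$ and $|p_0(x)| < \epsilon' p_1(x)$ for $x \in f^{-1}(-1)$. Splitting each absolute value into two inequalities, this is exactly the solvability of a homogeneous system of strict linear inequalities in the coefficient vector $(p_0, p_1)$. Note $p_0(x)>0$ on $f^{-1}(1)$ and $p_1(x)>0$ on $f^{-1}(-1)$ whenever the system is satisfied, so the feasible set of $\epsilon'$ is an up-set; hence $R(f, d_0, d_1) \ge \epsilon$ is equivalent to infeasibility of this system for every $\epsilon' < \epsilon$.

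First I would dispatch the easy implication. Suppose $(\psi_0, \psi_1)$ is a dual witness for parameter $\epsilon$, and suppose for contradiction that $R(f, d_0, d_1) < \epsilon$, witnessed by polynomials $(p_0, p_1)$ for some $\epsilon' < \epsilon$. Since $\psi_0$ has pure high degree at least $d_0 \ge \deg(p_0)$ and $\psi_1$ has pure high degree at least $d_1 \ge \deg(p_1)$, we have $\langle \psi_0, p_0 \rangle + \langle \psi_1, p_1 \rangle = 0$. On the other hand, for $x \in f^{-1}(1)$ the inequalities $\psi_0(x) \ge \epsilon |\psi_1(x)| \ge 0$ and $\epsilon' p_0(x) > |p_1(x)| \ge 0$ give $\psi_0(x) p_0(x) + \psi_1(x) p_1(x) \ge (1 - \epsilon'/\epsilon)\, \psi_0(x) p_0(x) \ge 0$, and symmetrically on $f^{-1}(-1)$ one gets $\psi_0(x) p_0(x) + \psi_1(x) p_1(x) \ge (1 - \epsilon'/\epsilon)\, \psi_1(x) p_1(x) \ge 0$. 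Summing over $x$, the total is $0$ while each summand is nonnegative, so every summand vanishes; but any $x$ with $\psi_0(x) \ne 0$ (which exists since $\|\psi_0\|_1 > 0$) makes one summand strictly positive, a contradiction. Hence $R(f, d_0, d_1) \ge \epsilon$.

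For the converse, fix $\epsilon' < \epsilon$. By the reduction above, the strict homogeneous system is infeasible, so by a theorem of the alternative (Gordan's theorem) there is a nonzero vector of nonnegative dual multipliers $a_x, b_x$ (one pair per $x \in f^{-1}(1)$) and $c_x, d_x$ (one pair per $x \in f^{-1}(-1)$) whose weighted combination of the constraint rows vanishes. Reading off the $p_0$- and $p_1$-blocks of this relation and setting $\psi_0(x) := \epsilon'(a_x + b_x)$, $\psi_1(x) := b_x - a_x$ on $f^{-1}(1)$ and $\psi_0(x) := d_x - c_x$, $\psi_1(x) := \epsilon'(c_x + d_x)$ on $f^{-1}(-1)$, one checks directly that $\psi_0$ has pure high degree at least $d_0$, $\psi_1$ has pure high degree at least $d_1$, $\psi_0 \ge \epsilon' |\psi_1|$ on $f^{-1}(1)$, $\psi_1 \ge \epsilon' |\psi_0|$ on $f^{-1}(-1)$ (using $|b_x - a_x| \le a_x + b_x$ and $|d_x - c_x| \le c_x + d_x$), and $(\psi_0, \psi_1)$ is not identically zero. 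Moreover, if one of $\psi_0, \psi_1$ were identically zero, the one-sided sign conditions would force the other to be nonnegative everywhere, and then its pure high degree (at least $1$) would force it to have mean zero, hence to vanish — contradiction — so both have positive $\ell_1$-norm. Finally, normalize $\|\psi_0\|_1 + \|\psi_1\|_1 = 1$, let $\epsilon' \uparrow \epsilon$ along a sequence, and use compactness of the unit $\ell_1$-ball in this finite-dimensional space to extract a limit $(\psi_0, \psi_1)$; pure high degree and the (now non-strict) inequalities $\psi_0 \ge \epsilon|\psi_1|$, $\psi_1 \ge \epsilon|\psi_0|$ pass to the limit, and the same mean-zero argument rules out triviality, producing the desired dual witness for parameter $\epsilon$.

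The main obstacle is the converse direction: correctly encoding the ratio/infimum structure of $R(f, d_0, d_1)$ as a homogeneous strict linear feasibility problem (in particular linearizing both one-sided absolute-value conditions and verifying that the dual multipliers assemble into two functions $\psi_0, \psi_1$ satisfying \emph{both} sign constraints simultaneously), together with the limiting argument that ensures non-triviality of the dual witness survives the passage $\epsilon' \uparrow \epsilon$. The remaining ingredients are routine: the easy direction is a pairing argument of the same flavor as the proof of the dual formulation of threshold degree (Theorem~\ref{theorem:tdegdual}).
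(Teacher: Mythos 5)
The paper does not prove this statement; it is cited verbatim as \cite[Theorem 6.6]{Sherstov14} and the appendix explicitly notes that its content ``is standard and provided only for intuition.'' So there is no in-paper proof to compare against. Evaluating your proposal on its own merits: it is correct and is the standard route --- encode the two one-sided absolute-value constraints as four families of strict homogeneous linear inequalities in the coefficient vector $(p_0,p_1)$, invoke Gordan's theorem of the alternative to obtain nonnegative multipliers $(a_x,b_x,c_x,d_x)$ summing to the zero functional, and read off $\psi_0,\psi_1$ from the $p_0$- and $p_1$-blocks; the identities $|b_x-a_x|\le a_x+b_x$ and $|d_x-c_x|\le c_x+d_x$ then give exactly the one-sided inequalities. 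Two places deserve emphasis. First, the limiting step $\epsilon'\uparrow\epsilon$ is genuinely needed: since $R(f,d_0,d_1)$ is an infimum over an up-set that may not be attained, Gordan only gives a witness for each $\epsilon'<\epsilon$, and the compactness argument on the normalized $\ell_1$-ball is the right way to push to $\epsilon$; note that all the relevant conditions (pure high degree, the sign inequalities, the normalization) are closed, so they pass to the limit. Second, the non-triviality of \emph{both} $\psi_0$ and $\psi_1$ is not automatic from Gordan and you correctly close this with the mean-zero observation: if say $\psi_1\equiv 0$, the sign conditions force $\psi_0\ge 0$ everywhere, and orthogonality to the constant $1$ (available since $d_0>0$) forces $\psi_0\equiv 0$, contradicting the normalization. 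Overall this matches the structure of Sherstov's own LP-duality proof, and I see no gaps.
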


\medskip \noindent \textbf{The Connection Between Rational Approximate Degree and Threshold Degree of Intersections.}
Let $F=f \wedge f$. Beigel et al.~\cite{BRS95} famously showed that if $f$
has rational approximate degree at most $d$, then $F$ has threshold degree at most $O(d)$
(see \cite[Section 1.2]{Sherstov14} for a lucid explanation of this upper bound).
The dual formulations of threshold degree  (Theorem \ref{theorem:tdegdual}) and rational approximate degree (Theorem \ref{thm: rdegequivalence}) together
allow us to interpret
Theorem \ref{thm: srdegtosthrdeg} as showing that the upper bound of Beigel et al.~is tight.
That is, Theorem \ref{thm: srdegtosthrdeg} shows that if $f$ has rational $\delta$-approximate degree at least $d$ for an appropriate value of $\delta$,
then $F$ has threshold degree at least $d$. In fact, Theorem \ref{thm: srdegtosthrdeg} explicitly translates
a dual witness to the high rational approximate degree of $f$ into a dual witness to the high threshold degree of $F$.

\end{document}